\newtheorem{theorem}{\sf THEOREM}
\newtheorem{lemma}{\sf LEMMA}
\newcommand{\brr}[1]{\left(#1\right)}
\newcommand{\srr}[1]{\left[#1\right]}
\newcommand{\sq}{\sin^2\!\left(\tfrac{\theta}{2}\right)}
\newcommand{\cq}{\cos^2\!\left(\tfrac{\theta}{2}\right)}
\title{Entanglement, Yang-Mills, and the Scattering Matrix as an SU(N)-equivariant Kernel}
\author{Kun-Feng Lyu,}
\author{Rahul Muraleedharan,}
\author{Kuver Sinha}
\affiliation{Homer L. Dodge Department of Physics and Astronomy, University of Oklahoma, Norman, OK 73019, USA}
\abstract{
We study two-body scattering as an SU(N)-equivariant map acting on tensor-product representation spaces and analyze the entanglement generated by the $S$-matrix.  
This representation-theoretic perspective separates group structure from dynamics: the decomposition of $R\!\otimes\!R'$ fixes the invariant operator algebra and therefore the qualitative entangling power of the process.  For particles in the fundamental representation, $\mathrm{End}_{\mathrm{SU}(N)}(N\!\otimes\!N)=\mathrm{Span}\{\mathbb{I},\mathbb{S}\}$, so only the identity and swap directions preserve separability, whereas generic combinations generate entanglement. Adjoint-adjoint scattering involves a larger invariant algebra involving $d$-tensors and is intrinsically entangling. In Yang-Mills theory one can use color-kinematics duality to show that the color kernel lies on a fixed ray of this operator space, yielding a universal maximum of the outgoing entanglement for scattering at right angles, $E_\star^{(2)}=\tfrac{3}{4}$ for $SU(2)$ and $E_\star^{(3)}\simeq0.91$, independent of kinematics. Dimension-six operators preserve this universality, while dimension-eight deformations populate new color sectors and shift $E_\star^{(N)}$, suggesting that entanglement in color space  functions as a tomographic probe of effective operators. In helicity space, requiring maximally entangled inputs to scatter into maximally entangled outputs uniquely selects the Yang-Mills quartic coupling and enforces the color Jacobi identity, restating the on-shell Ward constraints as conditions on entanglement preservation. Our results suggest that the information-theoretic viewpoint unifies algebraic, geometric, and dynamical aspects of scattering. 
}
\begin{document}

\maketitle

\section{Introduction}

Scattering amplitudes encode the dynamical and algebraic structure of quantum field theory.  Beyond their conventional role in computing cross sections, they provide an arena where symmetry, representation theory, and quantum information intersect.  
In recent years, the entanglement properties of scattering processes have been used to extract group-theoretic and geometric information directly from the $S$-matrix, suggesting that certain symmetry patterns can be understood as statements about the separability or entangling power of the underlying operator\footnote{There has been much recent activity in applications of entanglement to collider physics, which is not the focus of our attention. For representative papers along this direction, we refer to~\cite{Fabbrichesi:2021npl,Severi:2021cnj,Afik:2022kwm,Aguilar-Saavedra:2022uye, Cheng:2024btk, Afik:2020onf, Dong:2023xiw,Han:2023fci, Fabbrichesi:2022ovb, Aoude:2023hxv, Maltoni:2024tul,Duch:2024pwm,Fabbrichesi:2025ywl, Aoude:2022imd,Severi:2022qjy,Gu:2025ijz}.}~\cite{Beane:2018oxh,Low:2021ufv,McGinnis:2025brt}.

The purpose of this paper is to develop a unified framework in which these ideas are made fully representation-theoretic. Taking the external lines to carry representations $R$ and $R'$ of SU(N), so the initial Hilbert space is 
$\mathcal H_R\otimes\mathcal H_{R'}$, 
we treat the two-body scattering matrix as an SU(N)-equivariant map
\begin{equation}
     K:\; \in\ \mathrm{End}^{\rm SU(N)}(\mathcal H_R\otimes\mathcal H_{R'})\,,
  \qquad [K,\,U\otimes U]=0,\quad U\in \mathrm{SU}(N),
\end{equation}
and study the entanglement properties of its action on product states. $K$ thus lives in the commutant algebra generated by the projectors $P_{\mathcal R}$ onto irreps ${\mathcal R}\subset R\otimes R'$.
This approach disentangles the purely algebraic content of the amplitude from its dynamical details: once the tensor-product decomposition of the external representations is known, the space of SU(N)-equivariant endomorphisms $\mathrm{End}^{\rm SU(N)}(\mathcal H_R\otimes\mathcal H_{R'})$, and therefore the possible patterns of entanglement are completely fixed by group theory. The physical dynamics only select particular coefficient functions within this invariant operator space. The dependence of the entanglement on the group representation and on the coefficients of the invariant tensors makes it a sensitive probe of both algebraic and dynamical structure.

The subsequent specialization to gluon scattering in Yang-Mills theory simply amounts to choosing the adjoint representation and a particular linear combination of invariant tensors, which, in this case, are the well-known double-$f$ color structure. For adjoint scattering, the presence of the singlet and $d$-tensor sectors in the invariant operator algebra unavoidably mix symmetric and antisymmetric channels, so that even initially separable gluon states become entangled after scattering. Adjoint-adjoint scattering is therefore intrinsically an \emph{entangler}, a statement that depends only on representation theory.

The kinematics in the case of Yang-Mills  appears in the form of the usual color-kinematics duality~\cite{Bern:2008qj}, leading to a specific form of the coefficient functions within the invariant operator space.  Notably, the coefficient functions are determined entirely by the group structure and the scattering angle, and do not rely on the kinematic scattering amplitude. We find that the outgoing entanglement reaches a universal maximum at right angles, $\theta=\pi/2$, independent of the specific choice of external color states.
This universality follows from the algebraic structure of the invariant tensors rather than from kinematics. The value of this maximum outgoing entanglement is therefore a group invariant. We explicitly work it out in the cases of SU(2) and SU(3).

For fundamental scattering, on the other hand, the space of invariant operators is two-dimensional, spanned by the identity and swap operators. The only separability-preserving rays in this space correspond to the identity and swap directions; for any other choice of ray in this space $K$ generates entanglement in the outgoing state. The absence of singlet and $d$-tensor projections implies that fundamental-fundamental scattering is ``minimally entangling" in the sense that no additional invariant tensors exist to increase its entangling power. These results hold regardless of whether the SU(N) is a global or gauge symmetry.

Given that for adjoint scattering in Yang-Mills the outgoing entanglement reaches a universal maximum at $\theta=\pi/2$ that only depends on the algebraic structure of the invariant tensors, one can utilize this as a probe of operators that violate this structure. We study, in particular, dimension six and dimension eight operators. Color-kinematics duality continues to hold after incorporating dimension six operators \cite{Dixon:1993xd,Dixon:2004za,Broedel:2012rc}, which implies no alteration of the double-$f$ structure; dimension six operators, therefore, leave the entanglement unchanged. Dimension eight operators, on the other hand,  introduce new $dd$ product sectors, altering the universal value obtained at dimension four, by amounts that depend on the cut-off scale.

In the second part of the paper, we  study bipartite entanglement in the helicity Hilbert space for $2\!\to\!2$ scattering of massless vector bosons~\cite{Nunez:2025dch}. Here, the scattering operator acts on the tensor product of one-particle helicity spaces rather than color spaces. We perform purely \textit{formal} deformations\footnote{These  are purely formal in the sense that they are undertaken to check how well entanglement detects such unphysical deformations.} away from the Yang-Mills Lagrangian, allowing a cubic term $AA\partial A$ with general coefficients $f_{abc}$ and a local quartic with an a priori free coefficient $\kappa$. It is well known that demanding on-shell gauge invariance (the Ward identities) together with locality fixes the cubic coefficients $f_{abc}$ to be antisymmetric, determines the coefficient in the quartic term to be the Yang-Mills coefficient ($\kappa=1$), and enforces the  Jacobi relation among the coefficients $f_{abc}$, so that one obtains the usual $\mathcal L_{\rm YM} \;=\; -\frac{1}{4}\, F^a_{\mu\nu} F^{a\,\mu\nu}$ (we call this set of conditions the ``YM locus") \cite{Cheung:2017pzi}. We ask whether the same information leading to the  Yang-Mills locus can be recast  as information-theoretic statements about  the $2\!\to\!2$ scattering map.

  Schematically, the approach consists in preparing  an initial two-gluon state $|\mathrm{in}\rangle$ that is maximally entangled (MaxE) in the $\{\!+,-\!\}$ helicity basis on the two external legs and studying the normalized entanglement of the outgoing state $|\mathrm{out}\rangle=M\cdot|\mathrm{in}\rangle$, with $M$ the helicity amplitude matrix.  We quantify bipartite entanglement by the linear entropy which equals $1$ if and only if the out-state is maximally entangled (MaxE) and vanishes for products. For normalized $|\mathrm{out}\rangle$ the entanglement is basis independent on the two helicity legs and insensitive to the overall scaling of $M$. Our results are as follows. 
 
 $(i)$ MaxE$\to$MaxE in Yang-Mills: 
We find that for all scattering angles and for all maximally entangled inputs, the final state is always maximally entangled in Yang-Mills. 
Algebraic checks show that the final state remains maximally entangled due to the Maximally Helicity Violated (MHV) property. 

 $(ii)$ Departure from Yang-Mills leads to violation of MaxE$\to$MaxE: Under formal deviations from $\kappa =1$ and the color Jacobi identity $c_s + c_t + c_u = 0$, and assuming that the Ward identity no longer applies (the external polarization vector can be shifted: $\epsilon_\mu \rightarrow \epsilon_\mu + \xi\, p_\mu$), we calculate the  change of the entanglement of the final state. We find  that only the Yang-Mills locus respects the MaxE$\to$MaxE behavior.

There has been much recent work probing the intriguing correlation between symmetry and entanglement, and we end the Introduction by mentioning the broad contours of the literature. The fact that suppression of  entanglement in 2-to-2 scattering can lead to the enhanced  symmetries was pointed out by ~\cite{Beane:2018oxh,Low:2021ufv,McGinnis:2025brt,McGinnis:2025xgt,McGinnis:2025iab}. The first discussion on the correlation between  entanglement and emergent symmetries was in the context of low energy baryon scattering systems~\cite{Beane:2018oxh}. Subsequently, other scenarios were studied, including meson scattering~\cite{Hu:2024hex},  octet and decuplet baryons scattering~\cite{Liu:2022grf,Liu:2023bnr}, lowest-lying spin-3/2 baryons~\cite{Hu:2025jne} scattering, 
quantum electrodynamics scattering processes~\cite{Blasone:2024dud, Blasone:2024jzv, Blasone:2025ddi, Blasone:2025tor},
Higgs scattering in two-Higgs-doublet models (2HDMs)~\cite{Carena:2023vjc,Kowalska:2024kbs,Chang:2024wrx,Busoni:2025dns}, scattering mediated by heavy particles~\cite{Sou:2025tyf}, forward scattering~\cite{Kowalska:2025qmf}, crossing symmetry in the scattering~\cite{McGinnis:2025xgt},the area law in particle scattering~\cite{Low:2024mrk,Low:2024hvn},
understanding the  pattern of flavor symmetry and weak mixing angles in the Standard Model~\cite{Thaler:2024anb,Liu:2025bgw} and the dynamics of electroweak phase transition~\cite{Liu:2025pny}. Recent discussions of entanglement in gauge theories occur in~\cite{Cervera-Lierta:2017tdt,Cheung:2020uts,Nunez:2025dch,Gargalionis:2025iqs, Nunez:2025xds}.

The rest of the paper is organized as follows. 
Sec.~\ref{sec:2_SU(N)map} lays out the basic setup of the entanglement power in the Hilbert space with adjoint representation.
Then we explicitly computed the entanglement in Yang-Mills in Sec.~\ref{YMkins2}. The entanglement in the helicity space is discussed in Sec.~\ref{sec:entanle_hel}. Finally we summarize and conclude in Sec.~\ref{sec:conclusion}.

\section{Scattering Matrix as an SU(N)-equivariant Map}\label{sec:2_SU(N)map}


\subsection{Motivation and General Scope}

Throughout this section we treat the two-to-two scattering matrix as a linear map between tensor-product representation spaces of a compact group SU(N).
All statements will rely only on the fact that the scattering operator is
SU(N)-equivariant, i.e., it commutes with the diagonal group action on the two-particle Hilbert space.
No assumption is made about whether the symmetry arises from a local gauge invariance or from a global flavor symmetry, nor about the microscopic dynamics.

This perspective emphasizes that many features of the entanglement structure of scattering processes follow purely from representation theory:
once the tensor-product decomposition $R\otimes R'$ is known,
the space of SU(N)-equivariant endomorphisms $\mathrm{End}_{SU(N)}(R\otimes R')$, and hence the possible form of the scattering operator, are fixed. The physical dynamics only select particular coefficient functions within this invariant operator space.

\subsection{Entanglement Formalism and the Scattering Operator}

We denote the initial state by \(|\mathrm{in}\rangle\) and the final state by
\(|\mathrm{out}\rangle = S\,|\mathrm{in}\rangle\),
where \(S= \mathbb{I}+ iT\) is the unitary S-matrix.
For a \(2\!\to\!2\) process
\begin{equation}
     g_a(p_1)+g_b(p_2)\rightarrow g_c(p_3)+g_d(p_4),
\end{equation}
the reduced amplitude
\begin{equation}
M_{cd,ab}(p_1,p_2\!\to\!p_3,p_4)
\end{equation}
serves as the matrix elements of a linear map
acting on the product Hilbert space
\(\mathcal{H}_1\!\otimes\!\mathcal{H}_2\).
Suppressing kinematic delta functions,
we define the corresponding operator
\begin{equation}
  K(\Theta)_{cd,ab}
   := M_{cd,ab}(p_1,p_2\!\to\!p_3,p_4),
   \qquad
   \Theta=\{\text{helicities, momenta, scattering angle}\}.
\end{equation}
In what follows, $K$ will be treated as an abstract SU(N)-equivariant map,
without specifying the functional dependence on~\(\Theta\).
The operator acts on product basis vectors \(|a,b\rangle:=|a\rangle\!\otimes\!|b\rangle\) as
\begin{equation}\label{eq:defK}
  K(\Theta): |a,b\rangle \xrightarrow{K_{cd,ab}(\Theta)} \,|c,d\rangle .
\end{equation}

Given an input product vector
\(|\psi_{\mathrm{in}}\rangle = |u\rangle \!\otimes \!|v\rangle\),
with components \(u_a,v_b\) in the chosen basis of the representation \(R\),
the outgoing state coefficients are collected into the matrix
\begin{equation}
\mathcal{N}_{cd}=K_{cd,ab}u_a v_b .
\end{equation}
The normalized reduced density matrix on one leg is
\begin{equation}
\rho_R=\frac{\mathcal{N}\mathcal{N}^{\dagger}}{\mathrm{Tr}[\mathcal{N}\mathcal{N}^{\dagger}]},
\end{equation}
and the (normalized) linear entropy measuring bipartite entanglement is
\begin{equation}
E=\frac{d_R}{d_R-1}\Bigl(1-\mathrm{Tr}\,\rho_R^{\,2}\Bigr)
   =\frac{d_R}{d_R-1}
     \Bigl(
     1-\frac{\mathrm{Tr}[(\mathcal{N}\mathcal{N}^{\dagger})^2]}
             {(\mathrm{Tr}[\mathcal{N}\mathcal{N}^{\dagger}])^2}
     \Bigr),
\label{eq:E-def}
\end{equation}
with \(d_R=\dim R\). When a real adjoint basis is chosen and the kernel is real, one can replace $\mathcal{N}^\dagger$ by $\mathcal{N}^{\rm T}$; we keep the Hermitian form for generality.

A useful singular-value form is as follows. Let $s_i$ be the singular values of $\mathcal{N}$ (eigenvalues of $(\mathcal{N}\mathcal{N}^\dagger)^{1/2}$).
Then $\rho_R$ has spectrum $\{\lambda_i\}$ with
\(
\lambda_i=s_i^2\big/\sum_j s_j^2
\).
Therefore
\begin{equation}
E=\frac{d_R}{d_R-1}\left(1-\frac{\sum_i s_i^4}{\left(\sum_i s_i^2\right)^2}\right).
\label{eq:E-sv}
\end{equation}
This makes clear that: (i) $E$ is invariant under an overall rescaling of $\mathcal{N}$; (ii) $E=0$ iff only one $s_i$ is nonzero (a product state); and (iii) if exactly $k$ singular values are equal and nonzero, $E=\dfrac{d_R}{d_R-1}\left(1-\dfrac{1}{k}\right)$. The singular values of \(\mathcal{N}\) thus determine $E$;
$E=0$ iff only one singular value is nonzero (a separable output),
and $E$ reaches its maximum $E_{\max}=1$ for a maximally mixed reduced state.

Whenever convenient we will specialize to \(R=R'=\mathrm{Adj}\), which corresponds to $2\!\to\!2$ scattering of adjoint states $g$ of SU(N):  $g_a(p_1)+g_b(p_2)\to g_c(p_3)+g_d(p_4)$.  For SU(3), $g$ are the usual gluons, and we will use this terminology for other gauge groups (for example SU(2)) as well. A two-gluon color state lives in ${\rm Adj}\!\otimes\!{\rm Adj}$ with basis $|a,b\rangle:=|a\rangle\otimes|b\rangle$. Then we will work in the adjoint-index basis $\{|a\rangle\}$ of $\mathrm{SU}(N)$, with $a=1,\dots,N_A$ and $N_A=\dim(\mathrm{Adj})=N^2-1$. Until Sec.~\ref{YMkins2}, however, we will not introduce or use any kinematic relations, and work only with the color structure in Eq.~\eqref{eq:defK}. Thus, we will be agnostic about the dependence of $K$ on the set of variables $\Theta$ (and often keep the dependence on $\Theta$ implicit). In particular, we will not utilize color-kinematics duality.

\subsection{Representation Content and Entanglement Structure}
\label{sec:YM-basics}

The degree to which \(K\) entangles or preserves separability
is dictated purely by the decomposition of \(R\!\otimes\!R'\). We will illustrate this in the context of Yang-Mills theory, explicitly working out the cases of SU(2) and SU(3), taking two benchmark cases for the representation $R$ in each case: the adjoint and the fundamental.

In Yang-Mills theory the color-dressed amplitude at four points
is a linear combination of the three “double-$f$” tensors,
\begin{equation}
c_s=f_{abe}f_{cde},\qquad
c_t=-f_{ace}f_{bde},\qquad
c_u=f_{ade}f_{bce},\qquad
c_s+c_t+c_u=0,
\label{eq:doublef-Jacobi}
\end{equation}
so that
\begin{equation}
K(\Theta)\ \propto\ A_s(\Theta)\,c_s\;+\;A_t(\Theta)\,c_t\;+\;A_u(\Theta)\,c_u,
\qquad \text{with }A_i(\Theta)\ \text{kinematic weights}.
\label{eq:K-from-ff}
\end{equation}

We will be interested in the basis of SU(N)-equivariant endomorphisms of \(R\!\otimes\!R'\). Here, “equivariant” means endomorphisms on \(R\,\!\otimes\!\,R'\) that commute with the diagonal SU(N) action (i.e. are invariant under conjugation $T\mapsto U T U^{-1}$), whereas an “invariant tensor” means a group-fixed vector inside a tensor product space. In practice we use the invariant tensors $(\delta_{ab},f_{abc},d_{abc})$ to build the equivariant operators that act on \(R\,\!\otimes\!\,R'\). We now explicitly show the case of SU(2), and then discuss SU(3) in detail.

\subsection{SU(2): Kernel on ${\rm Adj}\!\otimes\!{\rm Adj}$ }

For $SU(2)$, $d_{abc}=0$ and the  set $\{\mathbb I, \mathbb S,P_{\text{singlet}}\}$ defined below already spans all equivariant endomorphisms on $\mathbf 3\otimes\mathbf 3$. In terms of the invariant tensors, these are given by
\begin{align}
(\mathbb I)_{cd,\,ab} &:= \delta_{ac}\,\delta_{bd} , && \text{(identity on the pair)} \label{eq:Iop}\\
(\mathbb S)_{cd,\,ab} &:= \delta_{ad}\,\delta_{bc} , && \text{(swap: $\mathbb S\,|a,b\rangle=|b,a\rangle$)} \label{eq:Sop}\\
(P_{\rm singlet})_{cd,\,ab} &:= \frac{1}{N_A}\,\delta_{ab}\,\delta_{cd} , && \text{(projector onto the singlet in ${\rm Adj}\!\otimes\!{\rm Adj}$)} \label{eq:Psing}
\end{align}
with $P_{\rm singlet}^2=P_{\rm singlet}$ and $\mathbb S^2=\mathbb I$. Any SU($2$)-invariant rank-$4$ tensor built from 
$\delta_{ab}$ and $f_{abc}$ can be reduced to a linear combination of the three operators Eq.~\eqref{eq:Iop}–Eq.~\eqref{eq:Psing}. Consequently, for some scalar functions $a(\Theta),b(\Theta),c(\Theta)$, we can write Eq.~(\ref{eq:K-from-ff}) as
\begin{equation}
K(\Theta)\;\propto\; a(\Theta)\,\mathbb I \;+\; b(\Theta)\,\mathbb S \;+\; c(\Theta)\,P_{\rm singlet}\,,
\label{eq:K-ISP}
\end{equation}
where the overall proportionality is irrelevant for entanglement once the out-state is normalized.

One can relate  $\{c_s,c_t,c_u\}$ to $\{\mathbb I,\mathbb S,P_{\rm singlet}\}$ by comparing Eq.~(\ref{eq:K-from-ff}) to Eq.~(\ref{eq:K-ISP}). For $\mathrm{SU}(2)$ one has $f_{abc}=\varepsilon_{abc}$ and $d_{abc}=0$, and the identity
\begin{equation}
\varepsilon_{ace}\,\varepsilon_{bde}
\;=\;
\delta_{ab}\delta_{cd}\;-\;\delta_{ad}\delta_{bc}
\;=\; (N_A\,P_{\rm singlet}-\mathbb S)_{cd,\,ab}.
\label{eq:eps-eps}
\end{equation}
By permuting the indices in Eq.~\eqref{eq:eps-eps} one obtains
\begin{equation}
\begin{aligned}
c_s &= f_{abe}f_{cde} \;=\; \mathbb I - \mathbb S,\\
c_t &= -f_{ace}f_{bde} \;=\; -N_A\,P_{\rm singlet} + \mathbb S,\qquad (N_A=3),\\
c_u &= f_{ade}f_{bce} \;=\; N_A\,P_{\rm singlet} - \mathbb I.
\end{aligned}
\label{eq:csctcu-in-ISP}
\end{equation}
Therefore, with $A_i(\Theta)$ the kinematic weights multiplying each color tensor in the amplitude,
\begin{align}
K(\Theta)
&\propto A_s\,c_s + A_t\,c_t + A_u\,c_u \nonumber\\
&= A_s(\mathbb I-\mathbb S) + A_t(-N_A P_{\rm singlet}+\mathbb S) + A_u(N_A P_{\rm singlet}-\mathbb I) \nonumber\\[2pt]
&= \underbrace{\big(A_s - A_u\big)}_{=:a(\Theta)}\,\mathbb I
\;+\; \underbrace{\big(-A_s + A_t\big)}_{=:b(\Theta)}\,\mathbb S
\;+\; \underbrace{N_A\big(-A_t + A_u\big)}_{=:c(\Theta)}\,P_{\rm singlet}.
\label{eq:Kcoeffs-SU2}
\end{align}
Equations Eq.~\eqref{eq:K-ISP} and Eq.~\eqref{eq:Kcoeffs-SU2} give an explicit map between the delta-tensor expressions and the operator basis $\{\mathbb I,\mathbb S,P_{\rm singlet}\}$.

\subsection{SU(3): Kernel on ${\rm Adj}\!\otimes\!{\rm Adj}$ }

For  SU(3), we  will intentionally restrict to the block-diagonal invariants
\[
\mathbb I,\qquad \mathbb S,\qquad P_{\text{singlet}},\qquad P_{8_S},\qquad P_{8_A},
\]
which are the identity, the swap $\mathbb S$, the singlet projector, and the two octet projectors (symmetric and antisymmetric) on $\mathbf 8\otimes\mathbf 8$. From these five one obtains
\[
P_{27}=\tfrac12\,(\mathbb I + \mathbb S)-P_{\text{singlet}}-P_{8_S},\qquad
P_{10\oplus\overline{10}}=\tfrac12\,(\mathbb I- \mathbb S)-P_{8_A},
\]
so this basis is sufficient to express the YM color kernel built from the double-$f$ structures relevant for YM\footnote{We note that this is not the full equivariant algebra. The complete algebra of $SU(3)$-equivariant endomorphisms on $\mathbf 8\otimes\mathbf 8$ is
\[
\mathrm{End}_{SU(3)}(\mathbf 8\otimes\mathbf 8)\;\cong\;
\mathbb C\oplus\mathbb C\oplus\mathbb C\oplus\mathbb C\;\oplus\;M_2(\mathbb C),
\]
i.e. scalars on $\mathbf 1,\mathbf{27},\mathbf{10},\overline{\mathbf{10}}$ together with an arbitrary $2\times2$ action on the two octets $\mathbf 8_S,\mathbf 8_A$. To span the off-diagonal octet intertwiners (mapping $\mathbf 8_S\leftrightarrow \mathbf 8_A$) one needs $df$-type operators, which anticommute with $\mathbb S$ and therefore cannot be generated by polynomials in $\{\mathbb I,\mathbb S,P_{\text{singlet}},P_{8_S},P_{8_A}\}$ (those all commute with $\mathbb S$ and are block-diagonal). However, we will not be interested in $df$-type operators in the context of YM.}.

For $\mathrm{SU}(N\!\ge\!3)$ the totally symmetric tensor $d^{abc}$ is nonzero, and the
double-$f$ color structures appearing at four points reduce to a sum of
$\delta\delta$ terms plus $d\,d$ terms. It is convenient to extend the operator
basis we have introduced thus far to include two independent $d\,d$ invariants.

We now introduce the operator basis and $d\,d$ tensors. Recall
\begin{align}
(\mathbb I)_{cd,\,ab} := \delta_{ac}\,\delta_{bd},\qquad
(\mathbb S)_{cd,\,ab} := \delta_{ad}\,\delta_{bc},\qquad
(P_{\rm singlet})_{cd,\,ab} := \frac{1}{N_A}\,\delta_{ab}\,\delta_{cd},
\end{align}
with $N_A=\dim(\mathrm{Adj})=N^2-1$.
When $d^{abc}\!\neq\!0$ we also define the three rank-$4$ tensors
\begin{equation}
(\mathbb D_s)_{cd,\,ab} := d_{abe}\,d_{cde},\qquad
(\mathbb D_t)_{cd,\,ab} := d_{ace}\,d_{bde},\qquad
(\mathbb D_u)_{cd,\,ab} := d_{ade}\,d_{bce}.
\label{eq:Dstu-def}
\end{equation}
They are not all independent:  for SU(3), there is one linear relation among $\{\mathbb D_s,\mathbb D_t,\mathbb D_u\}$,
so the $d\,d$ sector is effectively two–dimensional:
\begin{equation}
 \mathbb D_s + \mathbb D_t + \mathbb D_u \, = \, \frac{1}{3} (N_A P_{\rm singlet} + \mathbb I + \mathbb S)
\end{equation}
A convenient independent pair is
\((\mathbb D_t-\mathbb D_u,\ \mathbb D_u-\mathbb D_s)\). The independent operator set we choose is thus $\{\mathbb I, \mathbb S, P_{\rm singlet}, \mathbb D_t-\mathbb D_u, \mathbb D_u-\mathbb D_s\}$. Equivalently, one could have chosen the set  $\{\mathbb I, \mathbb S, P_{\rm singlet}, P_{8_S},P_{8_A}\}$.

Using the standard identity (with our adjoint-index conventions)~\cite{Haber:2019sgz}
\begin{equation}
f_{abe}f_{cde}
= \frac{2}{N}\,\big(\delta_{ac}\delta_{bd}-\delta_{ad}\delta_{bc}\big)
 \;+\; \big(d_{ace}d_{bde}-d_{ade}d_{bce}\big),
\label{eq:ff-reduction}
\end{equation}
and its index permutations, the three channel color factors become

\begin{equation}
\begin{aligned}
c_s &= f_{abe}f_{cde}
= \frac{2}{N}\,\big(\mathbb I-\mathbb S\big) \;+\; \big(\mathbb D_t-\mathbb D_u\big),\\[2pt]
c_t &= -f_{ace}f_{bde}
= \frac{2}{N}\,\big(-N_A P_{\rm singlet}+\mathbb S\big) \;+\; \big(-\mathbb D_s+\mathbb D_u\big),\\[2pt]
c_u &= f_{ade}f_{bce}
= \frac{2}{N}\,\big( N_A P_{\rm singlet} - \mathbb I\big) \;+\; \big(\mathbb D_s-\mathbb D_t\big).
\end{aligned}
\label{eq:csctcu-SUN}
\end{equation}

The explicit coefficients in the kernel can now be computed. The color kernel
\(
K(\Theta)\propto A_s c_s + A_t c_t + A_u c_u
\)
may be written in a minimal invariant basis as
\begin{align}
K(\Theta)
&\propto a(\Theta)\,\mathbb I \;+\; b(\Theta)\,\mathbb S \;+\; c(\Theta)\,P_{\rm singlet}
\;+\; d_1(\Theta)\,\big(\mathbb D_t-\mathbb D_u\big)
\;+\; d_2(\Theta)\,\big(\mathbb D_u-\mathbb D_s\big),
\label{eq:K-SU3-basis}
\end{align}
with coefficients (specializing Eq.~\eqref{eq:csctcu-SUN} to $N{=}3$ and $N_A{=}8$)
\begin{equation}
\begin{aligned}
a(\Theta) &= \frac{2}{N}\,\big(A_s - A_u\big)\Big|_{N=3} \;=\; \frac{2}{3}\,\big(A_s - A_u\big),\\[4pt]
b(\Theta) &= \frac{2}{N}\,\big(-A_s + A_t\big)\Big|_{N=3} \;=\; \frac{2}{3}\,\big(-A_s + A_t\big),\\[4pt]
c(\Theta) &= \frac{2}{N}\,N_A\,\big(-A_t + A_u\big)\Big|_{N=3,N_A=8} \;=\; \frac{16}{3}\,\big(-A_t + A_u\big),\\[4pt]
d_1(\Theta) &= \ A_s - A_u,\\[2pt]
d_2(\Theta) &= \ A_t - A_u.
\end{aligned}
\label{eq:coeffs-SU3}
\end{equation}
(Any overall proportionality in Eq.~\eqref{eq:K-SU3-basis} drops out once the out–state is normalized for entanglement.)

A few limits and special kinematics can be checked.

\begin{itemize}
\item \emph{SU(2) limit.} If $d^{abc}=0$ (as in SU(2)), then $\mathbb D_{s,t,u}=0$ and Eq.~\eqref{eq:K-SU3-basis} reduces to the SU(2) formula: only $(\mathbb I,\mathbb S,P_{\rm singlet})$ appear, with
$\ a=\tfrac{2}{N}(A_s-A_u)$, $b=\tfrac{2}{N}(-A_s+A_t)$, $c=\tfrac{2}{N}N_A(-A_t+A_u)$ evaluated at $N=2$, $N_A=3$.

\item \emph{Jacobi sum.} The combination $c_s+c_t+c_u=0$ holds identically by the Lie-algebra Jacobi identity (independently of $N$); in the basis Eq.~\eqref{eq:csctcu-SUN} this is reflected by the cancellations
\(
\tfrac{2}{N}\big[(N_A P_{\rm singlet}-\mathbb S)+(\mathbb I-N_A P_{\rm singlet})+(\mathbb S-\mathbb I)\big]=0
\)
and
\(
(\mathbb D_s-\mathbb D_u)+(\mathbb D_t-\mathbb D_s)+(\mathbb D_u-\mathbb D_t)=0.
\)
\end{itemize}

\subsection{Representation Structure of Scattering States and Entanglement: Fundamentals versus Adjoints}

As we have stressed, the most important quantity determining the entanglement is the \emph{representation structure}
of the internal Hilbert space of the incoming particles.
Once the relevant tensor product decomposition is fixed,
the degree to which the scattering operator $K$ entangles or preserves separability
is determined purely by the representation content, independently of whether the
underlying symmetry is realized as a global flavor group or as a local gauge group.

It is useful to restate what we have discussed thus far in a slightly more general language. Let external lines carry representations $R$ and $R'$ of SU(N), so the initial Hilbert space is 
$\mathcal H_R\otimes\mathcal H_{R'}$ with basis $|i\rangle\otimes|j\rangle$.
The scattering matrix induces a linear map
\[
K:\ |i,j\rangle\xrightarrow{K_{k\ell,\,ij}}\,|k,\ell\rangle,\qquad
K_{k\ell,\,ij}\ \in\ \mathrm{End}^{\rm SU(N)}(\mathcal H_R\otimes\mathcal H_{R'})\,.
\]
$K$ lives in the commutant algebra generated by the projectors $P_{\mathcal R}$ onto irreps ${\mathcal R}\subset R\otimes R'$.
Equivalently, in a permutation‐style operator basis one keeps only SU(N)-invariant rank-4 tensors built from $\delta$, $f$, $d$ appropriate to $R,R'$. We now turn to specific cases.

\subsubsection{Fundamental-Fundamental Scattering}\label{funfund}

\textit{Fundamental $\otimes$ Fundamental:} For $R=R'=\mathbf N$ one has the decomposition $\mathbf N\otimes\mathbf N=\mathrm{Sym}^2\mathbf N\ \oplus\ \wedge^2\mathbf N$.
For $N\ge3$ there is no singlet in $\mathbf N\otimes\mathbf N$.
Consequently the SU(N)-invariant operator algebra on $\mathbf N\otimes\mathbf N$ is two‐dimensional, spanned by
\[
\mathbb I_{k\ell,\,ij}=\delta_{ik}\delta_{j\ell},\qquad 
\mathbb S_{k\ell,\,ij}=\delta_{i\ell}\delta_{jk}.
\]
Hence the most general invariant kernel is
\[
K\ =\ a\,\mathbb I\ +\ b\,\mathbb S\,,\qquad (N\ge3).
\]
For SU(2), $\mathbf2\otimes\mathbf2=\mathbf3\oplus\mathbf1$ and a singlet does exist; one may equivalently work with projectors onto triplet and singlet or with $\{\mathbb I,\mathbb S\}$ plus the antisymmetric invariant $\epsilon_{ij}$ (fermion statistics may already single out the antisymmetric channel).

The example of SU(3)-equivariant endomorphisms of $\mathbf 3 \otimes \mathbf 3$ can be worked out. The Clebsch--Gordan decomposition
\[
\mathbf 3\otimes \mathbf 3 \;\cong\; \mathbf{6}\;\oplus\;\overline{\mathbf{3}}
\]
has each irrep with multiplicity $1$. By Schur's lemma,
\[
\mathrm{End}_{SU(3)}(\mathbf 3\otimes \mathbf 3)\;\cong\;\mathbb{C}\oplus\mathbb{C},
\]
i.e. an SU(3)-equivariant endomorphism acts as an independent scalar on the $\mathbf{6}$ and on the $\overline{\mathbf{3}}$. The symmetric/antisymmetric projectors are
\[
P_{\mathbf{6}}=\tfrac12\,(\mathbb I+ \mathbb S),\qquad
P_{\overline{\mathbf{3}}}=\tfrac12\,(\mathbb I- \mathbb S),\qquad
P_{\mathbf{6}}+P_{\overline{\mathbf{3}}}=\mathbb I,
\]
and
\[
\mathrm{End}_{SU(3)}(\mathbf 3\otimes \mathbf 3)=\mathrm{Span}\{P_{\mathbf{6}},\,P_{\overline{\mathbf{3}}}\}
=\mathrm{Span}\{\mathbb I,\, \mathbb S\}.
\]
There is no singlet in $\mathbf 3\otimes \mathbf 3$, hence no $P_{\rm singlet}$ term; and since $\mathbf{6}$ and $\overline{\mathbf{3}}$ appear with multiplicity $1$, no equivariant mixing between them is allowed.

No additional invariant tensors exist. Any theory in which the scattering particles transform in the fundamental
representation - whether a global $SU(N)$ flavor theory or a gauge theory such as QCD
with quarks - therefore produces, at the group-theoretic level, a two-parameter minimally entangling operator of the form $\mathrm{Span}\{\mathbb I,\, \mathbb S\}$. In this sense, quark-quark scattering and any analogous fundamental-fundamental process is ``minimally entangling''.
While the two-dimensional space $\mathrm{Span}\{\mathbb{I},\mathbb{S}\}$ usually generates entanglement for generic
complex coefficients $a,b$ (the physical content of the scattering, which includes the energy or angle dependence of the coefficients
$a,b$, resides entirely within this two–dimensional operator space), the  process is ``minimally entangling'' in the sense that
no further independent invariant tensors are available to enhance the entangling power. To elaborate further, for $K(\theta)\;=\;a(\theta)\,\mathbb{I} + b(\theta)\,\mathbb{S}$ and input product state $|u\rangle\!\otimes\!|v\rangle$,
the (unnormalized) output is
\[
|{\rm out}\rangle = a(\theta)\,|u\rangle\!\otimes\!|v\rangle
                  + b(\theta)\,|v\rangle\!\otimes\!|u\rangle.
\]
When $b(\theta)$ or $a(\theta)$ is vanishing,
the operator reduces to $\mathbb{I}$ or $\mathbb{S}$, both of which map separable states to separable states.
For any other ratio or phase difference between $a$ and $b$, a generic product
input acquires nonzero Schmidt rank:
the two terms above are not collinear in $\mathcal{H}_A\otimes\mathcal{H}_B$.
Hence, within the two-dimensional subspace
$\mathrm{Span}\{\mathbb{I},\mathbb{S}\}$,
only the directions $\mathbb{I}$ and $\mathbb{S}$ themselves are separability preserving; all other points in the plane generate entanglement.

We now go on to study the entanglement of fundamental scattering in Yang-Mills in more detail, explicitly working out the maximum entanglement.
Since the kernel $K(\theta)$ is invariant under the SU(N) transformation, one can always choose the rotation so that $u = (1,0,\cdots)^T$ and $v =(\cos\vartheta,\sin\vartheta \, e^{i \eta},\cdots)^T$. Here $\cos\vartheta = |u^\dagger v|$ is invariant under the rotation and $\eta$ is the possible additional phase. We use the rescaled kernel function 
\[
K \sim \mathbb{I} + z \, \mathbb{S},\qquad
z \, \equiv \, b(\theta)/a(\theta)\,\,.
\]
 Directly computing the final state and calculating its entanglement power, one can obtain
\begin{equation}\label{eq:fund_ent_formula}
    E(|\text{out}\rangle) = \dfrac{2N \, z^2 \sin^4 \vartheta}{(N-1) \brr{1+z^2 + 2 z \cos^2\vartheta}^2 } \ .
\end{equation}
It is easily checked that this value is invariant under the inversion transformation $z \rightarrow 1/z$. A flat direction occurs at $z = -1$, in which the entanglement is equal to $N/(2(N-1))$ independent of $\vartheta$. 
\begin{figure}
    \centering
    \includegraphics[width=0.6\linewidth]{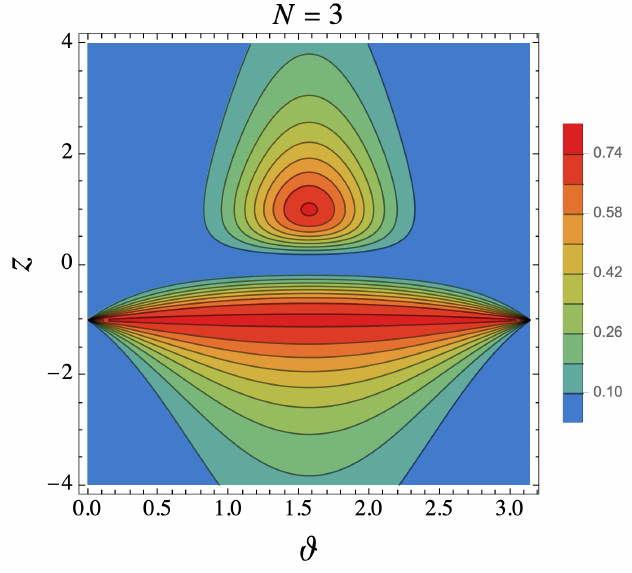}
    \caption{Entanglement power as a function of $z$ and $\vartheta$ as in Eq.(\ref{eq:fund_ent_formula}) for the case of  Fundamental-Fundamental scattering. Here we take $N = 3$ as the benchmark. }
    \label{fig:fund_ent}
\end{figure}
For fixed $N$ and other $z$ values, one can show that the maximal entanglement value is achieved at $\vartheta = \pi/2$, equaling to 
\begin{equation}
  E_{\rm max}\brr{\vartheta = \dfrac{\pi}{2}} =  \dfrac{2 N z^2}{(N-1)(1+z^2)^2} = \dfrac{2N}{(N-1)\brr{z+\frac{1}{z}}^2} \leq \dfrac{N}{2(N-1)} \ .
\end{equation}
In the last term, the equality only holds at $z = 1$. 
We display the entanglement power on the $(z,\vartheta)$ plane as in Fig.~\ref{fig:fund_ent}. The red region refers to the high entanglement case, concentrating around $z= -1$ and $(z = 1, \vartheta=\pi/2)$.
Therefore one can reproduce the maximal entanglement to be 1 for the case of SU(2). For $N = 3$, this maximal value equals $3/4$. As $N$ approaches infinity, the maximal entanglement tends to the fixed value $1/2$.

\subsubsection{Fundamental $\otimes$ Anti‐fundamental}
For $R=\mathbf N$, $R'=\overline{\mathbf N}$ one has $\mathbf N\otimes\overline{\mathbf N}=\mathbf1\oplus\mathrm{Adj}$.
The SU($N$)-invariant operator algebra is again two‐dimensional, now spanned by
\[
\mathbb I_{k\bar\ell,\,i\bar j}=\delta_{ik}\,\delta_{\bar j \bar \ell},\qquad 
P_{\rm{singlet},\,k\bar\ell,\,i\bar j}=\tfrac{1}{N}\,\delta_{i\bar j}\,\delta_{k\bar\ell}\,.
\]
(Note $\mathbb S$ does not act within the same space here.) Thus
\[
 K\ =\ \tilde a\,\mathbb I\ +\ \tilde c\,P_{\rm{singlet}}\,.
\]
Like adjoint-adjoint scattering, this case is also generally entangling, due to the singlet projection.

\subsubsection{Adjoint-Adjoint Scattering}

\textit{Adjoint $\otimes$ Adjoint:} For $R=R'=\mathrm{Adj}$ the minimal invariant basis at 4-pt is
\[
\big\{\ \mathbb I_{cd,\,ab}=\delta_{ac}\delta_{bd},\quad 
\mathbb S_{cd,\,ab}=\delta_{ad}\delta_{bc},\quad
P_{\rm{singlet},\,cd,\,ab}=\tfrac{1}{N_A}\delta_{ab}\delta_{cd}\ \big\}
\]
plus, when $N\ge3$, two independent $dd$-type tensors (e.g. $\mathbb D_t-\mathbb D_u$, $\mathbb D_u-\mathbb D_s$ with $\mathbb D_s{:=}d_{abe}d_{cde}$, etc.).
Thus
\[
K \;\propto\; a\,\mathbb I+b\,\mathbb S+c\,P_{\rm singlet}\;+\;\begin{cases}
0,& N=2,\\
d_1(\mathbb D_t-\mathbb D_u)+d_2(\mathbb D_u-\mathbb D_s),& N\ge3,
\end{cases}
\]
with coefficients fixed by the kinematic weights.

Unlike in the fundamental case,
there is no way to switch off the mixing between these
$P_{\text{singlet}}$ and $dd$ sectors in general.
Generic adjoint-adjoint scattering therefore inevitably maps a product (minimally entangled) initial state to an entangled final state.
Equivalently, the operator $K$ acting on $\mathrm{Adj}\otimes\mathrm{Adj}$
has nontrivial components outside $\mathrm{Span}\{\mathbb{I},\mathbb{S}\}$,
so adjoint scattering is intrinsically an \emph{entangler}.
This statement depends only on the representation content,
not on whether the adjoint particles are gauge bosons or
transform under some global adjoint symmetry.

\subsection{Adjoint Scattering is an Entangler}\label{entangdiagonal}

The contrast between the  case of fundamental versus adjoint scattering highlights that it is
the \emph{representation of the scattering states} and not the rank of the group,
nor the distinction between global and gauge realizations, that controls
the minimal entanglement structure of the $S$-matrix.
A fundamental scattering with a fundamental is group-theoretically
confined to a two–dimensional invariant space and is thus minimally entangling along the rays $\mathbb{I}$ and $\mathbb{S}$; an adjoint scattering with an adjoint 
 has a much larger invariant algebra and is
generically entangling.
The particular coefficients of the invariant tensors (and their
kinematic dependence) encode the detailed dynamics, but the
entanglement behavior follows directly from representation theory. For special kinematic configurations or fine-tuned coupling relations,
some of the entangling channels in the adjoint case may vanish,
yielding reduced entanglement, but these are exceptions rather than generic.

We now discuss, at a slightly more detailed level, the nature of adjoint-adjoint scattering and its relation to minimal entanglement. “Minimal entanglement” would require that the map $K$ be separability-preserving for all product inputs in $\mathrm{Adj}\otimes\mathrm{Adj}$ (up to local adjoint rotations on each leg). On ${\rm Adj}\otimes{\rm Adj}$ the SU(N)-invariant operator algebra at 4-pt includes
$\{\mathbb I,\mathbb S,P_{\rm singlet}\}$ and, for $N\!\ge\!3$, two independent $d\,d$ tensors.
Requiring $K$ to be minimally entangling (separability preserving) projects out $P_{\rm singlet}$ and the $d\,d$ sector, leaving
$K\in\mathrm{span}\{\mathbb I,\mathbb S\}$ up to local adjoint rotations. Allowing a $P_{\rm singlet}$ or $d\,d$ admixture generally generates  entanglement.

For example, one can check the case of $P_{\rm singlet}$. For a product basis vector $|a\rangle\otimes|b\rangle$,
\[
P_{\rm singlet}:\ |a\rangle\otimes|b\rangle\ \longmapsto\ \frac{\delta_{ab}}{N_A}\sum_{c}|c\rangle\otimes|c\rangle .
\]
Except on the measure-zero set with $\delta_{ab}=0$, the image is a fixed Bell-type vector of Schmidt rank $N_A>1$, hence entangled. Therefore any non-zero coefficient in front of  $P_{\rm singlet}$ in the decomposition of $K$ violates separability preservation for some product inputs.

A similar conclusion holds for the $\mathbb D$ operators. The action of $\mathbb D_u$, for example, on a product vector is
\[
\mathbb D_u:\ |a\rangle\otimes|b\rangle\ \longmapsto\ \sum_{c,d}\Big(\sum_{e}d_{abe}\,d_{cde}\Big)\,|c\rangle\otimes|d\rangle
\;=\;\sum_{e}d_{abe}\,|\psi_e\rangle,
\quad |\psi_e\rangle:=\sum_{c,d}d_{cde}\,|c\rangle\otimes|d\rangle .
\]
For generic $(a,b)$ the coefficients $d_{abe}$ do not all vanish, and each $|\psi_e\rangle$ has Schmidt rank $>1$ (the symmetric $d$ correlates the two legs). Hence $\mathbb D_u$ (and any nontrivial linear combination of $\mathbb D_{s,t,u}$) produces entangled outputs for an open set of product inputs. The same conclusion holds for $(\mathbb D_t-\mathbb D_u)$ and $(\mathbb D_u-\mathbb D_s)$ that span the $dd$ subspace.

On the other hand, one can check explicitly that the swap operator preserves separability, by computing the adjoint-space commutator. Let the color Hilbert space for each gluon be \(\mathcal H_{\rm Adj}\cong\mathbb C^{N_A}\) (with \(N_A=N^2-1\)).
Define the swap operator \(\mathbb S:\mathcal H_{\rm Adj}\otimes\mathcal H_{\rm Adj}\to\mathcal H_{\rm Adj}\otimes\mathcal H_{\rm Adj}\) by
\[
\mathbb S\,|a,b\rangle \;=\; |b,a\rangle,
\qquad\text{i.e.}\qquad
\mathbb S_{k\ell,\,ij}=\delta_{i\ell}\,\delta_{jk}.
\]
Let $SU(N)_A$  and $SU(N)_B$ be independent transformations on the two initial states.  Let \(Q_A\) and \(Q_B\) be (Hermitian) generators acting on the first and second factor, respectively; in the adjoint representation \(Q=\theta^A T_{A,\rm Adj}\) with \((T_{A,\rm Adj})_{ab}=-i f^{Aab}\) (any faithful adjoint-rep choice works). Consider the additive charge
\[
\mathcal Q \;=\; Q_A\otimes \mathbf 1 \;+\; \mathbf 1\otimes Q_B.
\]
$\mathbb S$ satisfies the identities, for any linear operator \(X\) on \(\mathcal H_{\rm Adj}\),
\begin{equation}
\mathbb S\,(X\otimes \mathbf 1) \;=\; (\mathbf 1\otimes X)\,\mathbb S,
\qquad
\mathbb S\,(\mathbf 1\otimes X) \;=\; (X\otimes \mathbf 1)\,\mathbb S,
\label{eq:swap-idents}
\end{equation}
and \(\mathbb S^2=\mathbf 1\).
Then
\begin{align}
[\mathbb S,\mathcal Q]
&= \mathbb S(Q_A\otimes \mathbf 1) - (Q_A\otimes \mathbf 1)\mathbb S
   \;+\; \mathbb S(\mathbf 1\otimes Q_B) - (\mathbf 1\otimes Q_B)\mathbb S \nonumber\\
&= (\mathbf 1\otimes Q_A)\mathbb S - (Q_A\otimes \mathbf 1)\mathbb S
   \;+\; (Q_B\otimes \mathbf 1)\mathbb S - (\mathbf 1\otimes Q_B)\mathbb S \nonumber\\
&=\big[(Q_B-Q_A)\otimes \mathbf 1 \;-\; \mathbf 1\otimes (Q_B-Q_A)\big]\,\mathbb S.
\label{eq:comm-result}
\end{align}
Since \(\mathbb S\) is invertible (\(\mathbb S^{-1}=\mathbb S\)), the commutator vanishes iff
\(
(Q_B-Q_A)\otimes \mathbf 1 \;=\; \mathbf 1\otimes (Q_B-Q_A).
\)
This implies \(Q_B-Q_A=c\,\mathbf 1\) on \(\mathcal H_{\rm Adj}\); but adjoint generators are traceless, hence \(c=0\).
Therefore
\[
[\mathbb S,\mathcal Q]=0 \quad\Longleftrightarrow\quad Q_A=Q_B,
\]
i.e. only the diagonal \(SU(N)_D\subset SU(N)_A\times SU(N)_B\) commutes with \(\mathbb S\).

\section{Yang Mills:  Color-Kinematics Duality and the Scattering Kernel }\label{YMkins2}

Up to now, we have only exploited the double-$f$ structure of the Yang-Mills amplitude. At this stage, further simplification of Eq.~(\ref{eq:coeffs-SU3}) is possible by assuming the standard color-dressed Yang-Mills amplitude form
\begin{eqnarray}
  K\,\,\equiv\,\,M_4^{\text{full,tree}} &=& \frac{n_sc_s}{s} +  \frac{n_tc_t}{t} + \frac{n_uc_u}{u} \nonumber \\
&=&  A_s\,c_s+\;A_t\,c_t+A_u\,c_u   \ ,
\end{eqnarray}
Then, one has $A_i(\Theta)=n_i/s_i$ with $s_i\in\{s,t,u\}$. The numerator $n_i$ factors depend on the kinematics and the external polarization vectors. 
It should be noted that each $A_i$ by itself is not gauge invariant;  only the whole amplitude is. From color-kinematics duality (CKD)~\cite{Bern:2008qj}, one  can always find a gauge choice such that 
\begin{equation}\label{eq:kin_jacobi1}
   s A_s + t A_t + u A_u = 0 \ .
\end{equation}
More details about amplitudes in pure Yang-Mills is provided in Appendix~\ref{ymscattbasics}. Using Eq.~(\ref{eq:kin_jacobi1}) along with $s+t+u=0$, one can simply get
\begin{equation}
    K = (A_s - A_t) \dfrac{c_s \, t - c_t\, s}{s+t}
\end{equation}
The full amplitude can be expressed as the product of the group structure part and the (helicity dependent) kinematic part. More explicitly, the variables $\{a,b,c,d_1,d_2\}$ in Eq.~(\ref{eq:coeffs-SU3}) can all be expressed in terms of the variable $a$ only, as follows
\begin{equation}
\begin{aligned}
b &= \left(\frac{u}{t}\right)\,a ,\\[2pt]
c &= N_A \,\left(\frac{s}{t}\right)\,a,\\[2pt]
d_1 &= \frac{N}{2}\,a,\\[2pt]
d_2 &= -\frac{N}{2}\left(\frac{s}{t}\right)\,a,
\end{aligned}
\label{eq:afterkcdual}
\end{equation}
so that one obtains 
\begin{align}
K(\Theta)
&\propto a(\Theta)\,\left[\mathbb I \;+\; \left(\frac{u}{t}\right)\,\mathbb S \;+\; N_A \left(\frac{s}{t}\right)\,P_{\rm singlet}
\;+\;\frac{N}{2}\,\big(\mathbb D_t-\mathbb D_u\big)
\;-\; \frac{N}{2}\left(\frac{s}{t}\right)\,\big(\mathbb D_u-\mathbb D_s\big)\right],
\label{eq:K-SU3-basis-after-KC}
\end{align}

\subsection{CKD guarantees separability of color and helicity Hilbert spaces}\label{subsec:scope-doublef}

Up to the end of  Sec.~\ref{entangdiagonal} we have made no use of CKD and remained agnostic about the kinematic functions $A_i(\Theta)$. Our only structural input is that the color kernel be a linear combination of the three double-$f$ tensors. With this assumption alone, all results up to Sec.~\ref{entangdiagonal} follow:
\begin{enumerate}
\item \textit{SU(2) and SU(3) operator expansions.} The reductions of $c_{s,t,u}$ onto the SU($N$)-equivariant operator basis
\(
\{\mathbb I,\mathbb S,P_{\mathbf1}\}\ (N=2)
\)
and
\(
\{\mathbb I,\mathbb S,P_{\mathbf1},\mathbb{D}_t-\mathbb{D}_u,\mathbb{D}_u-\mathbb{D}_s\}\ (N=3)
\)
are purely algebraic and hold for any $A_i(\Theta)$.
\item \textit{Intrinsic entanglers versus minimal entanglers:} The separability-preserving (``minimal entangler'') requirement projects $K$ onto $\mathrm{span}\{\mathbb I,\mathbb S\}$ and implies that only the diagonal subgroup $SU(N)_D$ commutes with the kernel; this is independent of any relation among $A_i(\Theta)$.
\end{enumerate}
In contrast, Yang-Mills-specific input first appears in Sec.~\ref{YMkins2}, where CKD appears, allowing the operator coefficients to be expressed in terms of a single function and kinematic ratios. 

The kinematics data is determined by the scattering angles and the helicities of the external particles. For the scattering of gauge bosons, then, there are two relevant Hilbert spaces: the color Hilbert space and the helicity Hilbert space. Suppose the initial state carries  definite color and helicity labels, say
\begin{equation}\label{eq:ini_state_1}
  |\text{in}\rangle = \brr{|a_1\rangle \otimes |\lambda_1\rangle} \otimes \brr{|a_2\rangle \otimes |\lambda_2\rangle}  \ .
\end{equation}
Since the scattering amplitude in Yang-Mills can be decomposed into a product of the helicity-dependent part and the color-dependent part, one can express the scattering matrix as
\begin{equation}
    M_{\text{out},\text{in}} = M^{(\rm color)}_{a_3 a_4,a_1 a_2} M^{\rm(hel.)}_{\lambda_3\lambda_4,\lambda_1\lambda_2} \ .
\end{equation}
Applying this on the above initial state, one gets
\begin{equation}
|\text{out}\rangle = \brr{M^{(\rm color)}_{a_3 a_4,a_1 a_2} |a_1\rangle \otimes |a_2\rangle} \otimes \brr{M^{(\rm hel.)}_{\lambda_3 \lambda_4,\lambda_1 \lambda_2} |\lambda_1\rangle \otimes |\lambda_2\rangle} \ .
\end{equation} 
This indicates that while there can be entanglement between the two particles in the final state, color and helicity are always separable. The same conclusion also applies for an initial state with entanglement between particle 1 and 2 but no mixing between color and helicity, for example
\begin{equation}
    |\text{in}\rangle = \brr{ \sum_{i,j} \alpha_{ij}| a\rangle_{1i}\otimes| a\rangle_{2j}}\otimes
    \brr{ \sum_{k,l} \beta_{kl}| \lambda\rangle_{1k}\otimes| \lambda\rangle_{2l}} \ .
\end{equation}

Critically, if CKD were not satisfied,  color and helicity  in the scattering amplitude would not have been separable, and generally,  color and helicity would always have been mixed in the final state even with no mixing in the initial state. 
This can be seen from the following reduction only using the color Jacobi identity,
\begin{equation}     K\,\,\equiv\,\,M_4^{\text{full,tree}} =  A_s c_s + A_t c_t + A_u c_u = (A_s - A_u) c_s + (A_t - A_u) c_t .
\end{equation}
The coefficients of $c_s$ and $c_t$ are two independent parameters, as a function of kinematic parameters especially the helicities. Acting on the initial state Eq.(\ref{eq:ini_state_1}), the out state is given by
\begin{equation}
  |{\rm out}\rangle = \srr{(A_s - A_u) |\lambda_1 \rangle \otimes  |\lambda_2 \rangle} \otimes \srr{  c_s |a_1 \rangle \otimes  |a_2 \rangle } + 
  \srr{(A_t - A_u) |\lambda_1 \rangle \otimes  |\lambda_2 \rangle} \otimes \srr{  c_t |a_1 \rangle \otimes  |a_2 \rangle } \ .
\end{equation}
This out-state is generically inseparable for the color and helicity indices.


It should be emphasized that the separability discussed here is obtained from a top-down perspective. In contrast (and in a complementary way), the authors of \cite{Hu:2025jne} adopt a bottom-up perspective, where the presence of the swap operator in the scattering matrix reveals a symmetry structure given by the direct product of the flavor and spin symmetry groups, thereby ensuring the separability between helicity and flavor.

It is therefore thanks to the separability property that one can study  entanglement in the color and helicity Hilbert spaces individually. In the rest of this Section, we  study entanglement in color space. We choose $\theta=\pi/2$ ($t=u$) as the benchmark scattering angle: the normalized kernel then lies on a fixed ray in the invariant-operator space, leading to a group-only (dynamics-independent) value of the product-input peak entanglement $E_\star^{(N)}$. We will study this next.

\subsection{$\theta=\pi/2$ universality and a group-invariant $E_\star$ for ${\rm Adj}\!\otimes\!{\rm Adj}$  Scattering}\label{groupinvariant}

In this Section, we study the behavior of the entanglement at the scattering angle $\theta=\pi/2$, where the kinematics completely decouples.

For $\theta=\pi/2$ in the c.m.\ frame  one has $t=u$. Under the assumptions of: (i) tree level, (ii) locality and crossing (so $K$ is a sum of the three channels) and (iii) no explicit $dd$ color tensors beyond those generated by the
double-$f$ factors in Eq.~\eqref{eq:doublef-Jacobi}, one has
\begin{lemma}[Right-angle ray]
\label{lem:right-angle-ray}
At $\theta=\pi/2$ $(t=u)$, the normalized kernel $K(\theta=\pi/2)$ is unique up to an overall scale:
it lies on a \emph{single ray} in the space of SU($N$)-invariant operators on ${\rm Adj}\!\otimes\!{\rm Adj}$.
More precisely,
\begin{equation}
K(\theta=\pi/2)\ \propto\ 
\begin{cases}
\ \ \ \, \ \ \mathbb I+\mathbb S-6\,P_{\rm singlet}, & N=2,\\[4pt]
\ \ \ \, \ \ \alpha_1\,\mathbb I\;+\;\alpha_2\,\mathbb S\;+\;\alpha_3\,P_{\rm singlet}\;+\;\alpha_4\,(\mathbb D_t-\mathbb D_u)\;+\;\alpha_5\,(\mathbb D_u-\mathbb D_s), & N\ge3,
\end{cases}
\label{eq:right-angle-direction}
\end{equation}
with fixed \emph{ratios} $(\alpha_1:\alpha_2:\alpha_3:\alpha_4:\alpha_5)$ that do not depend on the dynamics (i.e.\ on the values of $A_{s,t,u}$ at $\theta=\pi/2$), only on group theory.
\end{lemma}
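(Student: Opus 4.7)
The plan is to start from the CKD-reduced form of the kernel obtained earlier in Section~\ref{YMkins2} and specialize to $\theta=\pi/2$, then translate into the equivariant operator basis using the already-established identities in Eq.~\eqref{eq:csctcu-in-ISP} and Eq.~\eqref{eq:csctcu-SUN}. The right-angle direction will emerge as a single channel-antisymmetric combination whose operator-basis expansion is fixed entirely by group theory, with all kinematic dependence sitting in an irrelevant overall scalar.

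First, I would use the CKD relation $sA_s+tA_t+uA_u=0$ together with the Mandelstam constraints at $t=u$, namely $s=-2t$. Solving CKD at this point gives $A_u=2A_s-A_t$. Substituting into $K=A_sc_s+A_tc_t+A_uc_u$ and applying the Jacobi identity $c_s+c_t+c_u=0$ from Eq.~\eqref{eq:doublef-Jacobi} to eliminate $c_s$, I expect a two-line algebraic reduction to
\begin{equation}
K(\theta{=}\pi/2)\ \propto\ (A_s-A_t)\,(c_u-c_t),
\end{equation}
so that the entire kinematic content collapses into the single prefactor $(A_s-A_t)$, while the direction in the equivariant algebra is the group-theoretic object $c_u-c_t$.

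Next I would plug $c_u-c_t$ into the operator-basis reductions already derived. For $N=2$, using Eq.~\eqref{eq:csctcu-in-ISP} the $dd$ sector vanishes and the combination gives $(N_A P_{\rm singlet}-\mathbb I)-(-N_A P_{\rm singlet}+\mathbb S)=2N_A P_{\rm singlet}-\mathbb I-\mathbb S$, which at $N_A=3$ is proportional to $\mathbb I+\mathbb S-6P_{\rm singlet}$ after an overall sign absorbed into the scale, reproducing the first line of Eq.~\eqref{eq:right-angle-direction}. For $N\ge 3$, using Eq.~\eqref{eq:csctcu-SUN} the same combination produces $\tfrac{2}{N}(2N_A P_{\rm singlet}-\mathbb I-\mathbb S)+(2\mathbb D_s-\mathbb D_t-\mathbb D_u)$; rewriting $2\mathbb D_s-\mathbb D_t-\mathbb D_u$ in the chosen independent basis $(\mathbb D_t-\mathbb D_u,\ \mathbb D_u-\mathbb D_s)$ via $2\mathbb D_s-\mathbb D_t-\mathbb D_u=-(\mathbb D_t-\mathbb D_u)-2(\mathbb D_u-\mathbb D_s)$ yields the five-term form in Eq.~\eqref{eq:right-angle-direction}, with the ratios $(\alpha_1{:}\alpha_2{:}\alpha_3{:}\alpha_4{:}\alpha_5)$ read off as explicit functions of $N$ and $N_A$ only. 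Since $(A_s-A_t)$ is a single overall scalar, normalization of the out-state removes it, and the kernel lies on a single ray.

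The main potential obstacle is verifying that no hidden dynamical input sneaks in at $t=u$ — in particular, one must confirm that the CKD constraint at right angles has a single solution modulo the overall scale (so that nothing more than $A_u=2A_s-A_t$ can be imposed) and that the Jacobi identity and double-$f$ assumption are genuinely what kill the remaining freedom. The assumption list (i)–(iii) in the lemma statement is precisely what rules out additional $dd$ structures or higher-derivative contaminations that would otherwise generate a second independent direction in operator space at $t=u$. Once these three inputs are explicitly invoked, the reduction to a one-parameter family is forced, and the ray direction inherits only the structure constants of the algebra, establishing the claimed group-only universality.
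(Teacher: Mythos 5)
Your proposal is correct and follows essentially the same route as the paper: the paper's proof is the one-line observation that the lemma follows from the CKD-reduced kernel in Eq.~\eqref{eq:K-SU3-basis-after-KC} evaluated at $u/t=1$, $s/t=-2$, and your reduction $K\propto(A_s-A_t)(c_u-c_t)$ is exactly that same statement re-derived from $sA_s+tA_t+uA_u=0$ plus the Jacobi identity (indeed your five-term expansion equals $-\tfrac{2}{N}$ times the paper's). Your explicit operator-basis coefficients and the check that $(A_s-A_t)$ is the only dynamical remnant are consistent with the paper.
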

This follows immediately from Eq.~(\ref{eq:K-SU3-basis-after-KC}). Then one has the following

\begin{theorem}[Group-invariant peak entanglement from product inputs]
\label{thm:group-invariant-Estar}
Define
\[
E_\star^{(N)}\ :=\ \max_{\text{product }|u\rangle\otimes|v\rangle}\ 
E\!\left(\,{\rm normalized\ out\text{-}state\ of}\ K(\pi/2)\,|u\rangle\otimes|v\rangle\right),
\]
where $E$ is the linear entropy on the ${\rm Adj}$ bipartition.
Under assumptions (i)–(iii), $E_\star^{(N)}$ is independent of the dynamics and depends only on the gauge group,
i.e.\ $E_\star^{(N)}$ is a \emph{group invariant}.
\end{theorem}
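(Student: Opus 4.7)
The proof reduces almost directly to Lemma~\ref{lem:right-angle-ray} combined with the rescaling invariance of the linear entropy recorded in Eq.~\eqref{eq:E-sv}. The strategy is: if the direction of $K(\pi/2)$ in the SU($N$)-equivariant operator space is fixed by group theory alone, then any quantity computed from it that is blind to overall normalization is likewise a pure group-theoretic datum, and the linear entropy $E$ of the normalized out-state is manifestly such a quantity.

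Concretely, by the lemma one writes $K(\pi/2) = \lambda(\Theta)\, K_0^{(N)}$, where $K_0^{(N)}$ is the fixed reference operator assembled from the $\alpha$-ratios in Eq.~\eqref{eq:right-angle-direction} and $\lambda(\Theta)$ absorbs the overall dynamics-dependent prefactor. For any product input $|u\rangle \otimes |v\rangle$, the output matrix of coefficients is
\begin{equation}
\mathcal{N}_{cd} = K(\pi/2)_{cd,ab}\, u_a v_b = \lambda(\Theta)\, \mathcal{N}^{(0)}_{cd}, \qquad \mathcal{N}^{(0)}_{cd} := (K_0^{(N)})_{cd,ab}\, u_a v_b.
\end{equation}
The singular-value form Eq.~\eqref{eq:E-sv} makes it immediate that $E(\mathcal{N}) = E(\mathcal{N}^{(0)})$ whenever $\lambda(\Theta) \neq 0$: the spectrum of the normalized $\rho_R$ depends only on the ratios $s_i^2/\sum_j s_j^2$, and overall rescaling of $\mathcal{N}$ by $\lambda$ simply rescales every $s_i$ by $|\lambda|$, leaving these ratios untouched.

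Maximizing over product inputs on both sides of this identity yields
\begin{equation}
E_\star^{(N)} = \max_{|u\rangle \otimes |v\rangle}\, E\!\left( K_0^{(N)}\, |u\rangle \otimes |v\rangle \right),
\end{equation}
a function only of the fixed reference operator $K_0^{(N)}$ and hence only of the gauge group $\mathrm{SU}(N)$. Compactness of the set of normalized product vectors, together with continuity of the objective off the zero-image locus, guarantees that the maximum is attained; the image cannot vanish identically, as that would force $K_0^{(N)} = 0$ in contradiction with the lemma.

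The main conceptual work lies in the inputs rather than in this deduction. One needs: (a) Lemma~\ref{lem:right-angle-ray}, i.e.\ that at $t=u$ the specialization of Eq.~\eqref{eq:K-SU3-basis-after-KC} locks the ratios among $\{\mathbb{I}, \mathbb{S}, P_{\rm singlet}, \mathbb{D}_t-\mathbb{D}_u, \mathbb{D}_u-\mathbb{D}_s\}$ purely through the double-$f$ ansatz plus $t=u$, so that the kinematic functions $A_{s,t,u}$ drop out of the direction; and (b) $\lambda(\Theta) \neq 0$ on the physical kinematic slice, so that normalization is well defined. Both are routine given the machinery already assembled. The actual numerical value of $E_\star^{(N)}$ (quoted as $3/4$ for $\mathrm{SU}(2)$ and $\simeq 0.91$ for $\mathrm{SU}(3)$) still requires an explicit product-state optimization against $K_0^{(N)}$, but that simply extracts a number from a group-theoretically fixed object; the theorem's content is precisely this invariance claim.
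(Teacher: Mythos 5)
Your proposal is correct and follows essentially the same route as the paper's own argument: Lemma~\ref{lem:right-angle-ray} fixes the direction of $K(\pi/2)$ in the invariant operator space, the normalization of the out-state (equivalently the scale invariance visible in Eq.~\eqref{eq:E-sv}) removes the dynamics-dependent prefactor, and the maximization over product inputs then depends only on a group-theoretically fixed object. Your added remarks on compactness for attainment of the maximum and on the nonvanishing of the overall scalar are sensible refinements of the paper's sketch rather than a different approach.
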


\begin{proof}[Idea]
By Lemma~\ref{lem:right-angle-ray}, $K(\pi/2)$ is defined up to an overall scalar in a fixed direction
of the invariant operator space. After normalizing the out-state, $E$ depends only on that direction,
not on the overall scale. The maximization over product inputs
$|u\rangle\otimes|v\rangle$ depends only on how this fixed operator acts on rank-1 projectors in ${\rm Adj}$,
which is determined entirely by SU($N$) invariants (Casimirs and the contraction identities used above).
Hence the maximum is a function of $N$ alone.
\end{proof}

\subsection{SU(2): $E_\star$ calculation at $\theta = \pi/2$}
\label{subsec:SU2-right-angle}

 At $\theta=\pi/2$ ($t=u$), the SU(2) kernel collapses to 
\begin{equation}
K_{cd,ab}\brr{\dfrac{\pi}{2}}\ \propto\ \delta_{ac}\delta_{bd}+\delta_{ad}\delta_{bc}-2\,\delta_{ab}\delta_{cd}\,\,.
\label{eq:SU2-Heisenberg}
\end{equation}
 Acting on any product input $\alpha = u \otimes v$, i.e., $\alpha_{ab}=u_a v_b$ with $u,v\in\mathbb R^3$, the normalized out–state has color entanglement
\(
E(\pi/2;u\!\otimes\!v)=3/4
\)
which is a constant; hence, the product–class peak is universal:
\begin{equation}
E_\star^{(2)}\;=\;\max_{\text{product }u\otimes v}E\big(\pi/2\big)\;=\;\frac{3}{4}.
\label{eq:SU2-Estar}
\end{equation}
We give some more details of this computation below, and relegate an explicit check to Appendix~\ref{maxentsu2piover2}.

It should be noted that the adjoint representation is a real representation. Therefore we focus on the non-entangled initial state $u\otimes v$ with $u$ and $v$ are states with real coefficients up to a global phase difference. 
As can be seen, the kernel operator in Eq.~(\ref{eq:SU2-Heisenberg}) is invariant under the diagonal transformation $(O\otimes O)\otimes(O\otimes O)$. Therefore, one can always  rotate the two vectors to  $u' = (1,0,0)$ and $v'=(\cos\vartheta,\sin\vartheta,0)$ in some new basis. Here $\cos\vartheta = u\cdot v$ is the scalar product of $u$ and $v$, invariant under the transformation. Applying the scattering matrix to this initial state, one can obtain the final state matrix
\begin{equation}
    |\text{out} \rangle \sim 
\begin{pmatrix}
0 & \dfrac{\sin\vartheta}{2} & 0\\[4pt]
\dfrac{\sin\vartheta}{2} & -\cos\vartheta & 0\\[4pt]
0 & 0 & -\cos\vartheta
\end{pmatrix}
\end{equation}
Directly computing the entanglement power one can get the constant value $3/4$ (Appendix~\ref{maxentsu2piover2}).

\subsection{SU(3): $E_\star^{(3)}$ calculation at $\theta = \pi/2$}
\label{subsec:SU3-right-angle}

Determining $E_\star^{(3)}$ for the case of SU(3) involves computing the matrix $\mathcal{N}$. The analytical treatment is complicated and not very illuminating, and we opt instead for a numerical approach.
The $u$ and $v$ unit vectors are generated with equal probability on the unit sphere $S^7$ in $\mathbb{R}^8$. We scan over $10^5$ random pairs of $u$ and $v$. The distribution is displayed as in Fig.~\ref{fig:SU3_ent} on the plane of $E(\text{out})$ versus $\cos\vartheta=u\cdot v$. Within the sampled points, we find $E_*^{(3)}$ is around 0.9067. 
\begin{figure}
    \centering
    \includegraphics[width=0.6\linewidth]{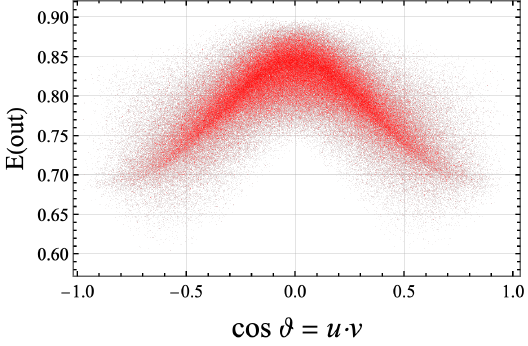}
    \caption{The sampling of points from on the plane of final state entanglement $E(\text{out})$ and the dot product of $u$ and $v$ vectors.}
    \label{fig:SU3_ent}
\end{figure}
 It can be seen that the entanglement is symmetric over $\cos\vartheta = 0$. This can be understood by adding a minus sign on the initial state matrix, namely, $u \otimes v \rightarrow -u\otimes v$, leading to $\cos\vartheta \rightarrow -\cos\vartheta$, which leaves the final state entanglement power  unchanged.

\subsection{Large N limit in SU(N): $E_*^{(N)}$ calculation at $\theta = \pi/2$}

We now consider the large $N$ limit. It is expected that the dominant contribution to the kernel is from the $d d$ sector. The general property is that each trace gives one factor of $N$. Therefore $\mathbb{D}_i \sim O(N)$ and the final state $\mathcal{N}$ matrix element is $\sim O(N)$. The entanglement $E\brr{|\text{out}\rangle}$ is given by
\begin{equation}
    E\ = \frac{N_A}{N_A-1}\!\left(1-\frac{\mathrm{Tr}\big[(\mathcal{N}\mathcal{N}^\dagger)^2\big]}
{\big(\mathrm{Tr}[\mathcal{N}\mathcal{N}^\dagger]\big)^2}\right)
\end{equation}
Counting the $N$ scalings of the numerator and denominator in the bracket, the ratio is always $\sim N^{-1}$. Hence as $N \rightarrow \infty$, the entanglement power approaches 1.

Recall that in the fundamental $\otimes$ fundamental scattering case, the entanglement power decreases with increasing $N$ and asymptotically approaches $1/2$ in the large-$N$ limit.
In contrast, for the $\rm Adj\,\otimes\,\rm Adj$ scattering, the entanglement power increases with $N$ and eventually approaches unity.
At first sight, this opposite trend is rather surprising.
A larger $N$ generally implies an enhanced underlying symmetry, which one would naturally expect to suppress the quantum entanglement generated by the scattering operator.
Hence, one might anticipate weaker entanglement for larger $N$.
However, our result shows the opposite behavior in the adjoint representation.

How do we understand this? We are starting from the initial two-particle states which live in $N\times N$ Hilbert space. For the fundamental representation, the states are in the fundamental of a SU(N) group, corresponding to $N^2 -1$ generators. However, it is possible, if one has no knowledge of the underlying group structure, that the same two-particle state which lives in $N\times N$ Hilbert space corresponds to the adjoint representation of a different group $SU(\sqrt{N+1})$. The invariant symmetry in this case only possesses N generators, which only accounts for $N/(N^2+1)$ part of all the possible generators.  In this sense, one can claim that the fundamental representation has ``higher symmetry" than the adjoint representation at fixed Hilbert space dimension.

\subsection{Probing EFT operators with color-space entanglement}\label{sec:EFT-updated}

We now turn to the question of effective operators and their effect on entanglement. Generally speaking, higher dimensional operators can introduce new color structures to the Yang-Mills Lagrangian. These extra color structures would induce a deformation away from the Yang-Mills locus in the basis $\{\mathbb I,\mathbb S,P_{\mathbf1},D_t-D_u,D_u-D_s\}$. It is cleanest to study such deformations at the scattering angle $\theta = \pi/2$. More details about Yang-Mills amplitudes are given in Appendix~\ref{ymscattbasics}.

We extend Yang-Mills by local gauge-invariant operators,
\begin{equation}
\mathcal L\;=\;\mathcal L_{\rm YM}\;+\;\frac{c_6}{\Lambda^2}\,\mathcal O_{F^3}
\;+\;\frac{1}{\Lambda^4}\sum_{j} c_{8,j}\,\mathcal O^{(8)}_j\;+\;\cdots,
\label{eq:EFT-L-upd}
\end{equation}
where
\(
\mathcal O_{F^3}\equiv f_{abc} F_{\mu}{}^{\nu\,a}F_{\nu}{}^{\rho\,b}F_{\rho}{}^{\mu\,c}
\)
is the unique CP-even three–field-strength operator at dim-6 (the $d^{abc}$ contraction vanishes for this Lorentz structure), and the dim-8 operators are listed later.
The induced kernel deformations can be parametrized as shifts of the five coefficients
\begin{equation}
\alpha_i=\alpha^{\rm YM}_i\;+\;\delta\alpha_i\,\,.
\label{eq:alpha-shifts-upd}
\end{equation}
At four points and tree level, dim-6 ($F^3$)\, operators have the same double-$f$ color span as Yang-Mills. A single $F^3$ 3-vertex joined to one Yang-Mills 3-vertex by a propagator produces channel color factors
\[
C_s=f_{abe}f_{cde},\quad C_t=f_{ace}f_{bde},\quad C_u=f_{ade}f_{bce},
\]
i.e. the same double-$f$ tensors as Yang-Mills exchange. Using
\(
f_{ace}f_{bde}
=\tfrac{2}{N}(\delta_{ab}\delta_{cd}-\delta_{ad}\delta_{bc})
+\big(d_{abe}d_{cde}-d_{ade}d_{cbe}\big),
\)
we see that $F^3$ populates both the permutation/singlet block \(\mathrm{span}\{\mathbb I,\mathbb S,P_{\mathbf1}\}\) and the  $dd$ block. Naively, one might expect that $\delta\alpha_{1,2,3}\big|_{F^3}\neq 0,$ and 
$\delta\alpha_{4,5}\big|_{F^3}\neq 0$.

However, the key point is that at dim-6, it has been shown that CKD still holds~\cite{Dixon:1993xd,Dixon:2004za,Broedel:2012rc}. In other words,  the scattering amplitude of YM plus dim-6 deformations can always be written in the general form,
\begin{eqnarray}
    M^{(6)} = M_s^{(6)} C_s + M_t^{(6)} C_t + M_u^{(6)} C_u\,\,
\end{eqnarray}
with $s M_s^{(6)} + t M_t^{(6)} + u M_u^{(6)} = 0$. One can always repeatedly perform the derivation as in Eq.~(\ref{Eq:A4_decomp_1}) to Eq.~(\ref{Eq:A4_decomp_2}) and get
\begin{equation}\label{Eq:A6_decomp}
M^{(6)} \sim  2\brr{M^{(6)}_t -M^{(6)}_s} \bigg[ t \, \text{Tr}\!\brr{ T_{a} T_{b} T_{c} T_{d}}   + u  \, \text{Tr}\!\left( T_{a} T_{b} T_{d} T_{c} \right) + s \, \text{Tr}\!\left( T_{a} T_{c} T_{b} T_{d} \right) \bigg]
\end{equation}
This has the same color structure as Yang-Mills except different kinematic amplitude. Moreover, the only non-vanishing helicity amplitudes are all plus/minus and single plus/minus helicity configurations.
The corresponding color-ordered amplitude is~\cite{Dixon:1993xd,Dixon:2004za,Broedel:2012rc}
\begin{align}
M^{(6)}(1^+,2^+,3^+,4^+)
&\sim 
   \frac{2 s\,t\,u}
        {\langle 12\rangle \langle 23\rangle \langle 34\rangle \langle 41\rangle} , \\[6pt]
M^{(6)}(1^-,2^+,3^+,4^+)
&\sim 
   \frac{- [23]^2 [34]^2 [42]^2}
        {[12][23][34][41]} \,\,,
\end{align}
with other helicity amplitudes  vanishing. These non-zero amplitudes are  orthogonal to the Yang-Mills helicity amplitudes. The key point is that upon normalization, the pre-factor $2\brr{M^{(6)}_t -M^{(6)}_s} $ will drop out from $E$, and therefore one obtains 
\begin{equation}
\delta\alpha_{1,2,3}\big|_{F^3} = 0, \quad 
\delta\alpha_{4,5}\big|_{F^3} = 0\,\,.
\end{equation}
\textit{Entanglement is blind to dim-6 deformations of Yang-Mills}.

At dim-8, there are 9 relevant operators~\cite{Murphy:2020rsh,Li:2020gnx,Corbett:2024yoy}
\begin{align}
\mathcal{O}^{(1)}_{F^4} &= (F_{\mu\nu}^A F^{A\mu\nu})(F_{\rho\sigma}^B F^{B\rho\sigma}) \\[4pt]
\mathcal{O}^{(2)}_{F^4} &= (F_{\mu\nu}^A \tilde{F}^{A\mu\nu})(F_{\rho\sigma}^B \tilde{F}^{B\rho\sigma}) \\[4pt]
\mathcal{O}^{(3)}_{F^4} &= (F_{\mu\nu}^A F^{B\mu\nu})(F_{\rho\sigma}^A F^{B\rho\sigma}) \\[4pt]
\mathcal{O}^{(4)}_{F^4} &= (F_{\mu\nu}^A \tilde{F}^{B\mu\nu})(F_{\rho\sigma}^A F^{B\rho\sigma}) \\[4pt]
\mathcal{O}^{(5)}_{F^4} &= (F_{\mu\nu}^A F^{A\mu\nu})(F_{\rho\sigma}^B \tilde{F}^{B\rho\sigma}) \\[4pt]
\mathcal{O}^{(6)}_{F^4} &= (F_{\mu\nu}^A F^{B\mu\nu})(F_{\rho\sigma}^A \tilde{F}^{B\rho\sigma}) \\[4pt]
\mathcal{O}^{(7)}_{F^4} &= d^{ABE} d^{CDE} (F_{\mu\nu}^A F^{B\mu\nu})(F_{\rho\sigma}^C F^{D\rho\sigma}) \\[4pt]
\mathcal{O}^{(8)}_{F^4} &= d^{ABE} d^{CDE} (F_{\mu\nu}^A \tilde{F}^{B\mu\nu})(F_{\rho\sigma}^C \tilde{F}^{D\rho\sigma}) \\[4pt]
\mathcal{O}^{(9)}_{F^4} &= d^{ABE} d^{CDE} (F_{\mu\nu}^A F^{B\mu\nu})(F_{\rho\sigma}^C \tilde{F}^{D\rho\sigma})
\end{align}
For the first six operators, the corresponding color structure is always proportional to 
\begin{equation}
     C_{cd,ab} \sim \delta_{ab} \delta_{cd} + \delta_{ac} \delta_{bd} +\delta_{ad} \delta_{bc} = N_A P_{\rm singlet} + \mathbb I +\mathbb S  \ .
\end{equation}
The last three operators, on the other hand,  introduce a $d d$ product structure. Therefore one can conclude that any linear combination of the nine dimension-eight operators would change the color structure of Yang-Mills, i.e., 
\begin{equation}
\delta\alpha_{1,2,3}\big|_{F^4}\neq 0, \quad 
\delta\alpha_{4,5}\big|_{F^4}\neq 0\,\,.
\end{equation}
The non-vanishing helicity amplitudes include both all plus/minus and MHV/$\overline{\rm MHV}$ configurations~\cite{Goldberg:2024eot}. Therefore it can interfere with either renormalizable part or dimension-6 operators. 
The entanglement is no longer determined solely by the color structure; it now depends on both the color structure and the kinematic amplitude.
At $\theta = \pi/2$, one can always make expansion over the Wilson coefficient and obtain
\begin{equation}
    E_{*\rm dim8}^{(N)} = \left\{
\begin{array}{ll}
E_{*}^{(N)} + \sum_{j}c_{8,j} \, C_j\,\dfrac{ s^2} {\Lambda^4} \ , & \text{MHV/}\overline{\text{MHV}} \\
E_{*}^{(N)} + \sum_{j} \dfrac{c_{8,j}}{c_6} \, D_j\,\dfrac{ s} {\Lambda^2} \ , & \text{all } +\!/{-}
\end{array}
\right.
\end{equation}
where $C_j$ and $D_j$ are the corresponding pre-factors.

\section{Entanglement in Helicity Space}\label{sec:entanle_hel}

Our focus up to this point has been on entanglement in color space. Now, we turn to the question of bipartite entanglement in the \emph{helicity} Hilbert space for $2\!\to\!2$ scattering of massless vectors (for related work, we refer to~\cite{Nunez:2025dch}). 
We will first introduce, in Section~\ref{sec:entanglement},  a feature of Yang-Mills: the fact that a maximally entangled (in helicity space) initial state always leads to \textbf{maximally entangled } final states. We will show the underlying algebraic structure of this statement in Section~\ref{subsec:algebraic-blocks}. The observable we will use is the linear entropy of the normalized out-state, which we define in Eq.~\eqref{eq:linentropy} (it is essentially the same entanglement measure that we have chosen till now, with the relevant dimension of the Hilbert space). Subsequently, in Section~\ref{deformedYM},  we will be interested in performing controlled deformations away from the Yang-Mills Lagrangian and checking the behavior of the entanglement.

\subsection{(Un-deformed) YM Satisfies Maximum Entanglement $\rightarrow$ Maximum Entanglement}
\label{sec:entanglement}

In this Section, we study an interesting property of the Yang-Mills Lagrangian
\begin{equation}
\mathcal L_{\rm YM} \;=\; -\frac{1}{4}\, F^a_{\mu\nu} F^{a\,\mu\nu},
\label{eq:L-YM}
\end{equation}
namely, that an initial state that is maximally entangled (MaxE) in helicity space remains maximally entangled after a 2$\to$2 scattering process (we will sometimes call this the ``MaxE $\to$ MaxE principle").

To compute the entanglement, the first step is to calculate the scattering matrix $M_{\lambda_3\lambda_4,\lambda_1,\lambda_2}$. 
In Sec.~\ref{YMkins2}, we have seen that the scattering amplitude can always be decomposed into a direct product of the kinematic part and the group structure part due to the CKD. As in Eq.~\eqref{eq:M4_full}, the full 4-point tree-level scattering amplitude can be written as
\begin{equation}\label{eq:M4_full}
\begin{split}
    M_4^{\rm full,tree} =&  g^2 M_4[1234]
    \bigg\{ \text{Tr}\!\left( T_a T_b T_c T_d \right) + \text{Tr}\!\left( T_a T_d T_c T_b \right) +\dfrac{u}{t} \big( \text{Tr}\!\left( T_a T_b T_d T_c \right) \\
    +&  \text{Tr}\!\left( T_a T_c T_d T_b \right) \big) + \dfrac{s}{t} \big( \text{Tr}\!\left( T_a T_c T_b T_d \right)
    +  \text{Tr}\!\left( T_a T_d T_b T_c \right) \big) \bigg\} \ .
\end{split}
\end{equation}
Therefore if there is no mixing between color and helicities in the initial states, the scattering matrix in the helicity space is purely determined by the color-ordered amplitude. The  Parke-Taylor formula at four point gives 
\begin{equation}\label{eq:parke_taylor}
M_{ij} \equiv M_4(\ldots i^-, \ldots j^-) 
= g^2 \frac{\langle ij \rangle^4}{\langle 12 \rangle \langle 23 \rangle \langle 34 \rangle \langle 41 \rangle}
\end{equation}
The amplitudes of all plus/minus and single plus/minus helicities are equal to zero. Only the two plus/minus helicities have non-vanishing amplitude. 
For different combinations of the helicity assignment, one can  replace the numerator of Eq.~\eqref{eq:parke_taylor} into  ket pairs with the corresponding external legs carrying minus helicity. In total there 6 nonvanishing configurations. Therefore the nonzero matrix components for (rescaled) $M_{\lambda_3\lambda_4,\lambda_1\lambda_2}$ are
\begin{equation}
\begin{split}
    M_{++,--} &= \langle 12\rangle^4, \quad
    M_{+-,+-} = \langle 24\rangle^4, \quad
    M_{+-,-+} = \langle 14\rangle^4,\\
    M_{-+,+-} &= \langle 23\rangle^4,\quad
    M_{-+,-+} = \langle 13\rangle^4,\quad
    M_{--,++} = \langle 34\rangle^4 \ .
\end{split}
\end{equation}
The alternative approach is to compute the scattering amplitude using Feynman rules. 
We may adopt the conventional polarization vectors
\begin{equation}
\epsilon^{\mu}(\hat{\mathbf{k}}, \pm ) 
= \frac{1}{\sqrt{2}} 
\Big(0,\ \mp \cos\theta \cos\varphi + i \sin\varphi,\ \mp \cos\theta \sin\varphi - i \cos\varphi,\ \pm \sin\theta \Big).
\label{eq:conventional-pol}
\end{equation}
Here $\hat{\mathbf{k}} = (\sin\theta\cos\varphi,\sin\theta\sin\varphi,\cos\theta)$ is the spatial unit vector of the external particle momentum with the scattering polar angle $\theta$ and the azimuthal angle $\varphi$. 
The azimuthal angle $\varphi$ introduces an additional phase in the scattering amplitude. 
The amplitude can be written as the conventional form,
\begin{equation}
  \mathcal{M} = \dfrac{n_s c_s}{s} +  \dfrac{n_t c_t}{t} + \dfrac{n_u c_u}{u}  \ .
\end{equation}
Since the entanglement in helicity space is not dependent on the group structure, we choose SU(2) group as the benchmark and $f_{abc} = \epsilon_{abc}$. 

We take the incoming two–qubit state to be maximally entangled. From App.~\ref{sec:setup}, a convenient parametrization is
\begin{equation}
|\mathrm{in}\rangle \;=\; U(\phi,\gamma,\psi)\,\frac{|+\!+\rangle+|-\!-\rangle}{\sqrt2},
\qquad U\in SU(2)\ .
\label{eq:MEinput}
\end{equation}
One can obtain
\begin{equation}\label{eq:in_state}
 |\text{in} \rangle  = 
 \begin{pmatrix}
     \cos\gamma + i \sin\gamma \cos\phi & i \sin\gamma \sin\phi \, e^{-i \psi} \\
     i \sin\gamma \sin\phi \, e^{i\psi} & \cos\gamma - i \sin\gamma \cos\phi 
 \end{pmatrix}  \ ,
\end{equation}
up to one global phase with the real angles $\gamma,\phi,\psi \in[0,2\pi]$.
We will scan over the angles ($\phi, \gamma, \psi$) in the following, to demonstrate that the results are independent of the initial state choice. 

Given an initial two-qubit helicity state $|\mathrm{in}\rangle$, the unnormalized out-state is
\begin{equation}
|{\rm out}\rangle_{\lambda_3\lambda_4} \;=\; M_{\lambda_3\lambda_4,\lambda_1\lambda_2}\,|{\rm in}\rangle_{\lambda_1\lambda_2}.
\end{equation}
We must normalize before computing entanglement:
\begin{equation}
\widetilde\rho \;\equiv\; \frac{|{\rm out}\rangle\langle{\rm out}|}{\langle{\rm out}|{\rm out}\rangle},
\qquad 
\widetilde\rho_R=\mathrm{Tr}_2\,\widetilde\rho,
\qquad
E \;=\; \frac{G}{G-1}\Big(1-\mathrm{Tr}\,\widetilde\rho_R^{\,2}\Big),
\qquad G=2.
\label{eq:linentropy}
\end{equation}

\begin{table}[!t]
\centering
\begin{tabular}{l>{\centering
\arraybackslash}p{0.2\linewidth}>{\centering\arraybackslash}p{0.2\linewidth}>{\centering\arraybackslash}p{0.2\linewidth}}
\toprule
 Parameter& Type & Symbol \\
\midrule
 scattering angle & kinematics & $\theta,\varphi$  \\
initial state & kinematics & $\phi, \gamma, \psi$\\
\midrule
\end{tabular}
\caption{\label{parameterstable}
The 5 parameters we choose to scan in YM, including  $\phi,\gamma,\psi$ parameterizing the initial maximally entangled state and the scattering angles $\theta$ and $\varphi$. 
}
\label{tab:entangparamsYM}
\end{table}

A scan of the five parameters in Tab.~\ref{tab:entangparamsYM} reveals that the  final state is \textbf{always maximally entangled}. In order to understand this result, we examine the algebra explicitly.

\subsubsection{Algebraic check of MaxE $\rightarrow$ MaxE (Helicity Blocks)}
\label{subsec:algebraic-blocks}

When using the Feynman rules for computation, the building blocks of the scattering amplitude include the external polarization vectors, the vertex coupling and the propagators. For  2$\to$2 scattering in Yang-Mills, one can show that flipping the helicities of all the external polarization vectors yields 
\begin{equation}
M[h_3,h_4,h_1,h_2]=M[-h_3,-h_4,-h_1,-h_2]^* \ .
\end{equation}
This pairs the $16$ helicity entries into $8$ conjugate pairs. Turning on each pair individually and acting on the maximally entangled  initial state
\begin{equation}\label{eq:param_max_ent}
|\mathrm{in}\rangle=
\left(
\begin{matrix}
\alpha&\beta^*\\[-2pt]-\beta&\alpha^*
\end{matrix}
\right)
\end{equation}
one can obtain the structure of the final state matrix:
\begin{align}
    M[+,+,+,+] & = M[-,-,-,-]^* = a_1, \quad 
    |\text{out} \rangle = \begin{pmatrix}
        a_1 \alpha & 0\\
        0 & a_1^* \alpha^*
    \end{pmatrix} \\
    M[+,+,+,-] & = M[-,-,-,+]^* = a_2, \quad
    |\text{out} \rangle = \begin{pmatrix}
        a_2 \beta^* & 0\\
        0 & - a_2^* \beta
    \end{pmatrix} \\
    M[+,+,-,-] & = M[-,-,+,+]^* = a_3, \quad
    |\text{out} \rangle = \begin{pmatrix}
        a_3 \alpha^* & 0\\
        0 &  a_3^* \alpha
    \end{pmatrix} \\
    M[+,+,-,+] & = M[-,-,+,-]^* = a_4, \quad
    |\text{out} \rangle = \begin{pmatrix}
        -a_4 \beta & 0\\
        0 &  a_4^* \beta^*
    \end{pmatrix} 
\end{align}
\begin{align}
    M[+,-,+,+] & = M[-,+,-,-]^* = a_5, \quad
    |\text{out} \rangle = \begin{pmatrix}
        0 & a_5 \alpha\\
        a_5^* \alpha^* & 0
    \end{pmatrix} \\
    M[+,-,-,+] & = M[-,+,+,-]^* = a_6, \quad
    |\text{out} \rangle = \begin{pmatrix}
        0 & a_6 \beta^*\\
        -a_6^* \beta & 0
    \end{pmatrix} \\
    M[+,-,-,-] & = M[-,+,+,+]^* = a_7, \quad
    |\text{out} \rangle = \begin{pmatrix}
        0 & a_7 \alpha^*\\
        a_7^* \alpha & 0
    \end{pmatrix} \\
     M[+,-,+,-] & = M[-,+,-,+]^* = a_8, \quad
    |\text{out} \rangle = \begin{pmatrix}
        0 & a_8 \beta^*\\
        -a_8^* \beta & 0
    \end{pmatrix} \ .
\end{align}
From Eq.~\eqref{eq:param_max_ent}, one finds that MaxE $\to$ MaxE requires that
\begin{equation}\label{eq:MaxE_condition}
    a_2 = a_4 = a_5 = a_7 = 0 \ ,
\end{equation}
while at least one non-vanishing pair in {$a_1,a_3,a_6,a_8$}
preserves the MaxE form. 
In Yang-Mills only the MHV/$\overline{\rm MHV}$ pairs are nonzero. 
Explicitly, we have the non-vanishing elements
\begin{equation}\label{eq:MHV_conf}
    M_{++,--} = (M_{--,++})^* = \mathcal{M}_1,\quad
    M_{+-,+-} = (M_{-+,-+})^* = \mathcal{M}_2,\quad
    M_{+-,-+} = (M_{-+,+-})^*= \mathcal{M}_3\ .
\end{equation}
Here, we have denoted the three amplitude components as $\mathcal{M}_1,\mathcal{M}_2,\mathcal{M}_3$ respectively. 
This can be understood by noting the reversed helicity polarization vectors are simply complex conjugation of each other.
Using Eq.~\eqref{eq:MHV_conf} and acting on the initial state as in~Eq.~\eqref{eq:in_state}, one finds that the entanglement of the final state is indeed 1. We call this the ``MaxE $\to$ MaxE principle".

\subsection{``Deformed" YM}\label{deformedYM}

In this Section, we perform controlled deformations away from the Yang-Mills Lagrangian, in order to study the behavior of the MaxE $\to$ MaxE principle. The idea of performing ``controlled deformations" requires clarification. Yang-Mills theory is the unique theory to describe massless vector fields; any deformations away from the cubic/quartic vertex or the Jacobi identity would lead to a failure of the quantization procedure and inconsistency of the group algebra structure (unphysical longitudinal modes would not be canceled by ghosts, leading to negative-norm modes; there would be no nilpotent BRST operator, etc.). Indeed, no consistent scattering amplitude would exist, so the question of studying the entangling properties of such an amplitude would be moot. 

Our calculations within such a deformed Yang-Mills setting should be seen as purely \textit{formal}, aimed at studying how well entanglement \textit{detects} such unphysical deformations. We will proceed in two steps. In the first step, in Section~\ref{wardonshell}, we will recapitulate the fact that the Yang-Mills Lagrangian is the only legitimate Lagrangian for massless vector fields, by appealing to on-shell Ward identities (this is well-known and follows, for example, \cite{Cheung:2017pzi}). The reason for adopting on-shell Ward identities as a proxy for the consistency of Yang-Mills is that this language is  particularly suited to studying the properties of the entanglement as one violates those identities.

\subsubsection{General Lagrangian for Gauge Fields in ``deformed" YM}

Our starting point is a set of $N$ abelian $U(1)$ fields $A^a$ with $a = 1, 2, \cdots N$. We want to ask: what is the general form of the renormalizable Lagrangian which contains the $N$ $U(1)$ fields, preserves the gauge symmetry, and perhaps even possesses an enhanced symmetry?   Schematically, one can write down the Lagrangian
\begin{equation}
\mathcal{L} \, = \, -\frac{1}{4} \sum_{a=1}^N F_{\mu \nu, a} F^{\mu \nu, a} \, + \, {\rm \{cubic \,\, interactions\}}\, + \, {\rm \{quartic \,\, interactions\}}\,\,\,.
\end{equation}
The kinetic term is just the summation of the kinetic terms of $N$ abelian fields. Generally, the cubic and quartic interactions will break the $U(1)^N$ symmetry of the kinetic term. It is convenient to parametrize the Lagrangian in the following way: 
\begin{equation}
\mathcal L \;=\; -\frac14\sum_a F^{a}_{\mu\nu}F^{a\,\mu\nu}
\;+\; \alpha\, f_{abc}\, A^{a}_{\mu}A^{b}_{\nu}\partial^{\mu}A^{c\,\nu}
\;+\; \kappa\,\alpha^2\, \mathcal O^{(4)}[A] ,
\label{eq:L-general}
\end{equation}
where $\alpha$ is a coupling, while $f_{abc}$ and $\kappa$ are real coefficients to be determined.

\subsubsection{On-shell Ward identities at Three and Four Points Force YM Locus}\label{wardonshell}

Our calculations in this Section  do not assume a non-Abelian gauge symmetry off shell. 
We allow the most general cubic $AA\partial A$ and a local quartic with an a priori free coefficient $\kappa$, and we only impose on-shell constraints. That is, we demand \emph{on-shell Ward identities} of the $S$-matrix: for each external leg $i$, a shift
\begin{equation}\label{eq:ward-onshell}
\varepsilon_i^\mu\;\to\; \varepsilon_i^\mu+\xi\,p_i^\mu
\end{equation}
must leave all on-shell amplitudes invariant. It is well known that demanding on-shell gauge invariance together with locality fixes the cubic coefficients $f_{abc}$ to be antisymmetric, determines the coefficient in the quartic term to be the Yang-Mills coefficient ($\kappa=1$), and enforces the  Jacobi relation among the coefficients $f_{abc}$ (we call this set of conditions the ``YM locus"). Equivalently, one may start from a non-Abelian gauge symmetry off shell, in which case these statements are automatic. By contrast, a genuine $U(1)^N$ gauge symmetry forbids nonzero $AA\partial A$ altogether.

One may ask why we begin with the on-shell Ward identities as opposed to the off-shell symmetry. If one postulates the off-shell symmetry, the result (antisymmetric $f_{abc}$, $\kappa\!=\!1$, Jacobi) is guaranteed and the entanglement analysis, or the introduction of the language of entanglement, are redundant. 

The on-shell Ward identities, in this regard, encode gauge redundancy only at the level of observable amplitudes:
$\varepsilon_i\!\to\!\varepsilon_i+\xi p_i$ leaves $\mathcal M$ invariant.  Unlike the off-shell route, we do not assume a Lagrangian gauge symmetry, a field basis, or BRST structure a priori. This makes the argument compatible with S-matrix bootstrap logic and robust under field redefinitions and gauge choices. The sketch of the proof is as follows.

\noindent \textit{Three points (antisymmetry)}: 
The most general three-vector on-shell amplitude is
\begin{equation}
\mathcal A_3 \;\propto\; f_{a_1a_2a_3}\Big[(\varepsilon_1\!\cdot\!\varepsilon_2)(\varepsilon_3\!\cdot\!(p_1-p_2))
+(\varepsilon_2\!\cdot\!\varepsilon_3)(\varepsilon_1\!\cdot\!(p_2-p_3))
+(\varepsilon_3\!\cdot\!\varepsilon_1)(\varepsilon_2\!\cdot\!(p_3-p_1))\Big].
\end{equation}
Enforcing Eq.~(\ref{eq:ward-onshell}) on each leg removes any symmetric part of $f_{a_1a_2a_3}$, leaving
\begin{equation}
f_{abc} \;=\; -\,f_{acb}\, .
\end{equation}

\noindent \textit{Four points $($YM quartic and Jacobi from on–shell Ward $)$}: 
At four points, the tree amplitude is
\begin{equation}
\mathcal M_4 \;=\; g^2\!\left(\frac{c_s\,n_s}{s}+\frac{c_t\,n_t}{t}+\frac{c_u\,n_u}{u}\right)
\;+\; g^2\,\kappa\,\mathcal C_4\, ,
\label{eq:4pt-decomp}
\end{equation}
where $c_s=f_{abe}f_{cde}$, $c_t=f_{ace}f_{bde}$, $c_u=f_{ade}f_{bce}$ are the color factors, $n_{s,t,u}$ are kinematic numerators, and $\mathcal C_4$ is the (local) contact tensor from $\mathcal O^{(4)}$.
We do not assume a Lie algebra a priori.

\medskip
\noindent\emph{Sketch of the on-shell Ward analysis.}
Choose leg~1 and replace $\varepsilon_1\to p_1$.
For the standard cubic vertex 
\(
V_{\mu\nu\rho}(p,q,r)=\eta_{\mu\nu}(p-q)_\rho+\eta_{\nu\rho}(q-r)_\mu+\eta_{\rho\mu}(r-p)_\nu
\),
one finds
\begin{equation}
p_1^\mu V_{\mu\nu\rho}(p_1,p_2,-k_s)\;=\;(k_s^2-p_2^2)\,\eta_{\nu\rho}\,,
\qquad
p_1^\mu V_{\mu\nu\rho}(p_1,p_3,-k_t)\;=\;(k_t^2-p_3^2)\,\eta_{\nu\rho}\,,
\label{eq:pV-identities}
\end{equation}
and analogously for the $u$-channel. With external legs on shell ($p_i^2=0$), the variations from the three exchange diagrams are \emph{local} and proportional to $+c_s$, $+c_t$, $+c_u$, respectively (the propagators $1/s$, $1/t$, $1/u$ cancel against $k^2=s,t,u$ from Eq.~(\ref{eq:pV-identities})). 
The quartic contact produces a local contribution proportional to $-\kappa\,(c_s+c_t+c_u)$ in the same tensor basis.
Hence the total variation under $\varepsilon_1\to p_1$ takes the form
\begin{equation}
\delta_1 \mathcal M_4 \;=\; (\kappa-1)\,\mathcal T_1 \;+\; (c_s+c_t+c_u)\,\mathcal T_2\,,
\end{equation}
where $\mathcal T_{1,2}$ are local tensors built from $\eta_{\mu\nu}$ and the remaining polarizations.
Repeating the argument on legs $2,3,4$ yields the same two independent tensor structures. 
Locality then forces
\begin{equation}
 \kappa=1, \,\qquad c_s+c_t+c_u=0 \, .
\label{eq:YM-conditions}
\end{equation}
The first condition fixes the quartic to the Yang–Mills value; the second is the color Jacobi identity at four points. 
Rephrasing, and parametrizing the deformations away from Jacobi explicitly, one may write 
\begin{equation}
    c_t\,=\, \chi_t c_s\qquad c_u=\chi_u c_s
\end{equation}
and then Eq.~\eqref{eq:YM-conditions} becomes $\chi_u=-1-\chi_t$.
Thus, on-shell gauge invariance alone (with locality) singles out the Yang-Mills locus at tree level.

\subsubsection{MaxE $\to$ MaxE Principle Equivalently Also Forces  YM Locus}

\begin{table}[!t]
\centering
\begin{tabular}{l>{\centering
\arraybackslash}p{0.2\linewidth}>{\centering\arraybackslash}p{0.2\linewidth}>{\centering\arraybackslash}p{0.2\linewidth}}
\toprule
 Parameter& Type & Symbol \\
\midrule
 scattering angle & kinematics & $\theta$  \\
initial state & kinematics & $\phi, \gamma, \psi$\\
color factor & gauge symmetry & $\chi_t, \chi_u$ \\
polarization & gauge symmetry & $\xi$\\
four-point interaction & gauge symmetry & $\kappa$\\
\midrule
\end{tabular}
\caption{\label{parameterstable}
The 8 parameters on which the entanglement depends, and their types.
}\label{entangparams}
\end{table}

In this Section, we investigate what happens to a maximally entangled initial state, if one allows deformations of Yang-Mills. At a given $\theta$, the entanglement depends on: the quartic deformation $\kappa$; the color-factor deformations $\chi_t$ and $\chi_u$, the three angles $(\phi,\gamma,\psi)$ that parametrize the MaxE input (a $U(2)$ rotation of a Bell state), and the \emph{unphysical} polarization–shift parameter $\xi$ on any chosen external leg,
\begin{equation}
\varepsilon^\mu\ \to\ \varepsilon^\mu+\xi\,p^\mu.
\label{eq:pol-shift}
\end{equation}
For demonstration, we choose to shift the the outgoing particle with momentum $p_3$. 
Thus one may list eight knobs $\{\theta,\kappa,\chi_t,\chi_u;\phi,\gamma,\psi;\xi\}$. It should be noted that we do not have the gauge redundancy for the full Lagrangian. While we are still choosing the external polarization vector as in Eq.~\eqref{eq:conventional-pol}, we add the additional $\xi$ parameter to choose different polarization vectors satisfying the on-shell transverse condition.

\subsubsection*{Fixed polarization ($\xi = 0$)}

We first choose $\xi = 0$, namely, the conventional polarization vector form. 
The quartic coefficient $\kappa$ and the color factors $(\chi_t,\chi_u)$ are varied. 
The amplititude can always be decomposed into
\begin{equation}
  \mathcal{M} \sim \dfrac{n_s}{s} + \dfrac{n_t \, \chi_t}{t} + \dfrac{n_u \, \chi_u}{t}
\end{equation}
in which $n_s,n_t,n_u$ are the same as that in the Yang-Mills case. It turns out that each kinematic factor $n_i (i=s,t,u)$ satisfies the MaxE$\rightarrow$MaxE condition Eq.~\eqref{eq:MaxE_condition} respectively. 
For a MaxE input, this implies
\begin{equation}
E(\theta;\kappa{=}1,\chi_t,\chi_u;\xi{=}0)=1
\qquad\text{for all }(\chi_t,\chi_u)\, .
\label{eq:Eeq1_fixedxi}
\end{equation}
On the other hand, the contact deformation repopulates the forbidden helicity blocks linearly in $(\kappa-1)$, and (after normalization) lowers $E$ quadratically:
\begin{equation}
E(\theta;\kappa,\chi_t,\chi_u;\xi{=}0)
=1-\mathcal C_\kappa(\theta;\chi_t,\chi_u)\,(\kappa-1)^2+\mathcal O\!\big((\kappa-1)^3\big),
\qquad \mathcal C_\kappa>0.
\label{eq:E_kappa_scan}
\end{equation}
Therefore a fixed-$\xi$ scan (like Fig.~\ref{fig:kappa_scan_fixedxi}) necessarily singles out $\kappa=1$ as the unique MaxE$\to$MaxE point, independently of whether Jacobi holds. This is why the Yang-Mills locus appears only as $\kappa=1$ in that figure. With $\kappa$ fixed to 1 and vanishing $\xi$, $E$ is identically 1 for any MaxE input and any $(\chi_t,\chi_u)$; scanning   $(\chi_t,\chi_u)$ would then therefore only check proper normalization and MHV selection. 

\subsubsection*{Fixed $\kappa = 1$}

The Jacobi identity is invisible unless we test the on-shell Ward shift. To obtain it, we will vary $\xi$ and demand $E$ be $\xi$-independent; the resulting gauge-variance heatmap vanishes exactly on $\chi_u = -1 - \chi_t$. We show this next.

Violations of the color Jacobi identity,
\begin{equation}
\delta J\;:=\;c_s+c_t+c_u \;=\; c_s(1+\chi_t+\chi_u)\,,
\label{eq:Jdef}
\end{equation}
do not alter the MHV selection for the fixed polarization basis $\xi = 0$.
\begin{figure}[t]
  \centering
  \begin{subfigure}[t]{0.88\textwidth}
    \centering
    \includegraphics[width=\linewidth]{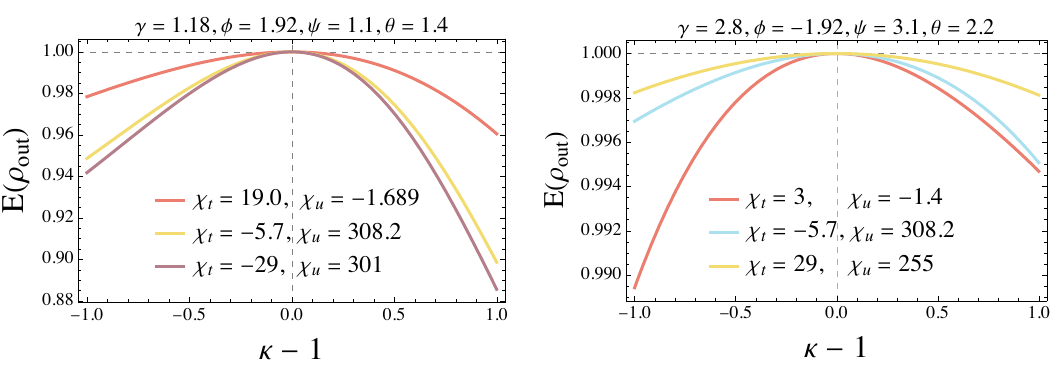}
    \caption{\textbf{Fixed polarization; $\kappa$ scan.}
    Linear-entropy entanglement $E(\theta;\kappa,\chi_t,\chi_u;\xi\!=\!0)$ vs.\ $\kappa$ for several random $(\chi_t,\chi_u)$ at fixed scattering angle $\theta$ (MaxE input).
    All curves peak at $E=1$ exactly at $\kappa=1$ and satisfy $1-E \sim \mathcal C_\kappa(\theta;\chi)\,(\kappa-1)^2$ near $\kappa=1$.
    This panel \emph{diagnoses the quartic} but does \emph{not} constrain Jacobi.\\}
    \label{fig:kappa_scan_fixedxi}
  \end{subfigure}

  \begin{subfigure}[h]{0.4\textwidth}
    \centering
    \includegraphics[width=\linewidth]{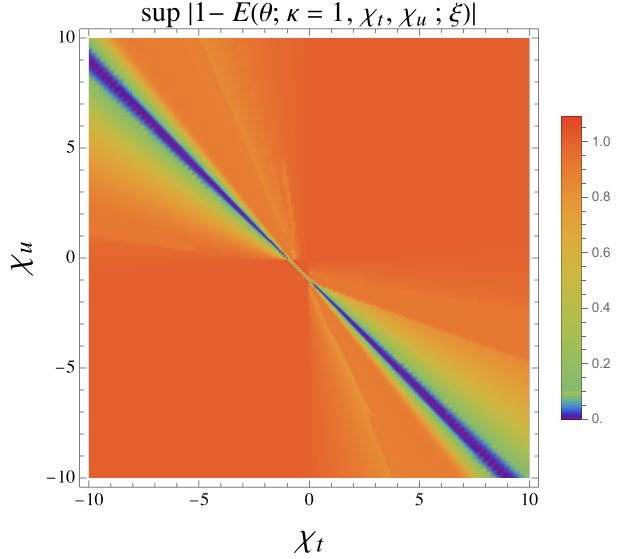}
    \caption{\textbf{On-shell Ward test; $\xi$-variance at $\kappa=1$.}
    Heatmap over $(\chi_t,\chi_u)$ of the gauge-variance metric $\Delta E_\xi(\chi_t,\chi_u)$. $\Delta E_\xi$ vanishes \emph{exactly} on the Jacobi line $\chi_u=-1-\chi_t$ and is positive elsewhere, scaling as
    $\Delta E_\xi \propto \xi_{\max}^2\,(1+\chi_t+\chi_u)^2$ near the locus.
    This panel \emph{diagnoses the Jacobi identity}.}
    \label{fig:xi_variance_heatmap}
  \end{subfigure}
\hspace{1cm}
%
  \begin{subfigure}[h]{0.4\textwidth}
    \centering
    \includegraphics[width=\linewidth]{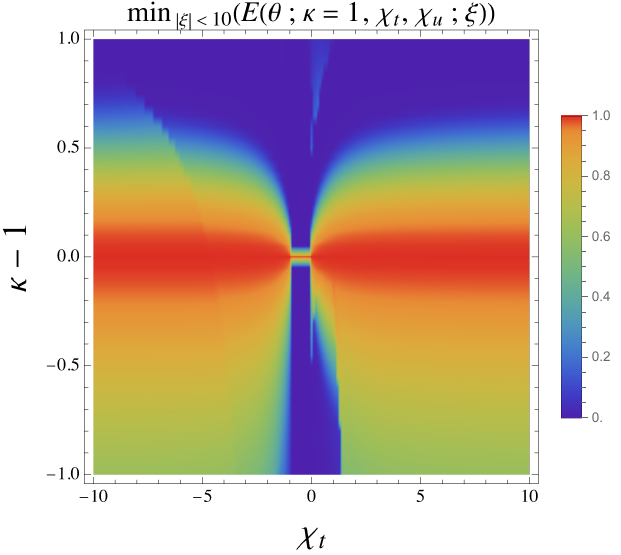}
    \caption{\textbf{Full YM locus}
    Defining $E_{\min}(\kappa;\chi_t,\chi_u):=\min_{|\xi|\le \xi_{\max}} E(\theta;\kappa,\chi_t,\chi_u;\xi)$,
    if $\chi_u=-1-\chi_t$ (Jacobi), then $E_{\min}$ attains $1$ at $\kappa=1$; off the Jacobi line, $E_{\min}(\kappa\!=\!1)<1$.
    This summarizes that the full YM locus requires \emph{both} $\kappa=1$ and Jacobi. We choose $\xi_{\rm max} = 10$. }
    \label{fig:Emin_vs_kappa}
  \end{subfigure}

  \caption{\textbf{Entanglement diagnostics of the Yang–Mills locus.}
  All panels use a maximally entangled (MaxE) input and fixed $\theta$ (no phase-space integration).
  (\textbf{A}) With fixed polarization ($\xi=0$), $E$ singles out $\kappa=1$ for any $(\chi_t,\chi_u)$: this is the quartic test.
  (\textbf{B}) Varying the on-shell polarization shift $\varepsilon\to\varepsilon+\xi\, p$ at $\kappa=1$ exposes Jacobi via $\Delta E_\xi$; the heatmap is dark \emph{only} on $\chi_u=-1-\chi_t$.
  (\textbf{C}) Worst-case over $\xi$ vs.\ $\kappa$ shows that the \emph{robust} MaxE$\to$MaxE point occurs iff both conditions hold, i.e.\ on the YM locus.}
  \label{fig:entanglement_diagnostics}
\end{figure}
However, the on-shell Ward shift on an external leg changes the amplitude by a local term proportional to $\delta J$. Acting on a MaxE input at $\kappa=1$, the induced change in the out-state is linear in $\xi\,\delta J$ and populates the forbidden helicity sector. After normalization, the entanglement drops quadratically:
\begin{equation}
E(\theta;\kappa{=}1,\chi_t,\chi_u;\xi)
\;=\;
1\;-\;\mathcal C_\xi(\theta)\,\xi^2\,\delta J^2 \;+\; \mathcal O(\xi^3)\,,
\qquad \mathcal C_\xi(\theta)>0\,.
\label{eq:E_xi_scan}
\end{equation}
Hence $E$ is independent of $\xi$ (the physical requirement) iff $\delta J=0$, i.e.\ iff the color Jacobi identity holds:
\begin{equation}
E(\theta;\kappa{=}1,\chi_t,\chi_u;\xi)\equiv1\ \ \text{for all }\xi
\quad\Longleftrightarrow\quad
 \chi_u=-1-\chi_t\ \ .
\label{eq:E_xi_equiv_J}
\end{equation}
In  Fig.~\ref{fig:xi_variance_heatmap}, we depict a heat map of $\Delta E_{\xi}$ on the plane of $(\chi_t,\chi_u)$, where 
\begin{equation}
      \Delta E_\xi(\chi_t,\chi_u)
      := \sup_{|\xi|\le \xi_{\max}}  \big|\,E(\theta;\kappa\!=\!1,\chi_t,\chi_u;\xi)-E(\theta;\kappa\!=\!1,\chi_t,\chi_u;0)\,\big|.
\end{equation}
We note that $\Delta E_\xi$ vanishes exactly on the Jacobi line $\chi_u=-1-\chi_t$ and is positive elsewhere, scaling as
    $\Delta E_\xi \propto \xi_{\max}^2\,(1+\chi_t+\chi_u)^2$ near the locus. This panel diagnoses the Jacobi identity.

\subsubsection*{All parameter scanning}

One can now scan all the parameters. For demonstration, we define 
\begin{equation}
E_{\min}(\kappa;\chi_t,\chi_u):=\min_{|\xi|\le \xi_{\max}} E(\theta;\kappa,\chi_t,\chi_u;\xi)\,\,.
\end{equation}
If $\chi_u=-1-\chi_t$ (Jacobi), then $E_{\min}$ attains $1$ at $\kappa=1$; off the Jacobi line, $E_{\min}(\kappa\!=\!1)<1$. In the bottom panel   of Fig.~\ref{fig:Emin_vs_kappa}, we show  $E_{\min}$ versus $(\kappa -1)$ with separate lines for ``Jacobi line" and ``off-Jacobi line". 
    
For a MaxE input and fixed $\theta$, the normalized entanglement admits the joint expansion
\begin{equation}
E(\theta;\kappa,\chi_t,\chi_u;\xi)
= 1 - \mathcal C_\kappa(\theta;\chi)\,(\kappa-1)^2
    - \mathcal C_\xi(\theta)\,\xi^2\,\delta J^2
    + \mathcal O\!\big((\kappa-1)^3,\,\xi^3,\,(\kappa-1)\xi^2,\,(\kappa-1)^2\xi\big),
\label{eq:E_joint_expansion}
\end{equation}
with $\delta J=c_s(1+\chi_t+\chi_u)$. 
The $(\kappa-1)^2$ term reflects repopulation of forbidden helicity blocks by the contact deformation, while the $\xi^2\delta J^2$ term reflects the on-shell Ward variation of the exchange kernel under a polarization shift.
Thus fixed-$\xi$ scans diagnose $\kappa$ via the first term, whereas $\xi$-invariance diagnoses Jacobi via the second.

We make a few comments about the scattering angle $\theta$. Our entanglement constraints are pointwise in the scattering angle $\theta$. For a maximally entangled input and generic kinematics (away from the forward/backward limits),
the normalized entanglement admits the small-parameter expansion around the YM point
\begin{eqnarray}
E(\theta;\kappa,\chi_t,\chi_u;\xi)
&=& 1
  - \underbrace{\mathcal C_\kappa(\theta;\chi_t,\chi_u)}_{>0}\,(\kappa-1)^2
  - \underbrace{\mathcal C_\xi(\theta)}_{>0}\,\xi^2\,\big(c_s{+}c_t{+}c_u\big)^2 \nonumber \\
  &+& \mathcal{O}\,\!\big((\kappa-1)^3,\,\xi^3,\,(\kappa-1)\xi^2,\,(\kappa-1)^2\xi\big).
\label{eq:theta-local}
\end{eqnarray}
Both coefficients $\mathcal C_\kappa(\theta;\chi)$ and $\mathcal C_\xi(\theta)$ are positive functions of $\theta$
(but otherwise unremarkable). The conditions selected by the entanglement test are therefore independent of~$\theta$: they hold at each angle separately. For this reason, our main diagnostic figures are shown at fixed $\theta$:
they isolate (i) the quartic test, $E(\kappa)$ at fixed~$\xi$, and (ii) the Jacobi test, $\xi$-invariance at $\kappa=1$,
without introducing an extra kinematic axis that only rescales $\mathcal C_\kappa$ and $\mathcal C_\xi$.

\section{Conclusion}\label{sec:conclusion}

The central message of this work is that the entanglement generated by scattering processes follows directly from representation theory.  
Treating the two-body $S$-matrix as an SU(N)-equivariant map disentangles group structure from dynamics:  
the tensor-product decomposition $R\!\otimes\!R'$ determines the algebra of invariant operators, which in turn fixes the qualitative entangling power of the scattering matrix.  
Physical details enter only through the coefficients multiplying these invariant tensors.

For particles in the fundamental representation, the invariant algebra $\mathrm{Span}\{\mathbb{I},\mathbb{S}\}$ is two-dimensional; only the identity and swap directions preserve separability, while generic combinations generate entanglement. 

Adjoint-adjoint scattering, by contrast, involves a five-dimensional invariant operator space $\{\mathbb{I},\mathbb{S},\mathbb{P}_{\text{singlet}},\mathbb D_t-\mathbb D_u,\ \mathbb D_u-\mathbb D_s\}$ whose additional projectors and $dd$ tensors inevitably correlate the two legs.  
This makes adjoint scattering intrinsically an \emph{entangler}, independent of whether the underlying $\mathrm{SU}(N)$ symmetry is global or gauge.

In Yang-Mills theory, CKD constrains the coefficients of these invariant tensors so that, at right angle ($\theta=\pi/2$), the normalized color kernel lies on a unique ray in the invariant-operator space.  
The resulting product-state entanglement $E_\star^{(N)}$ is a group invariant:  
for $SU(2)$ one finds $E_\star=3/4$, for $SU(3)$ numerically $E_\star\simeq0.91$, and as $N\!\to\!\infty$ the entanglement approaches unity.  

Dimension-six deformations ($F^3$ operators) respect this structure and leave $E_\star^{(N)}$ unchanged, whereas dimension-eight operators ($F^4$) populate new color sectors and produce calculable shifts of the universal value.  
Entanglement in color space thus functions as a tomographic probe of effective operators, separating those that preserve the gauge–theoretic structure from those that do not.

In helicity space, we studied the entanglement of final states with maximally entangled initial states and find the final states always remain maximally entangled. 
This remarkable feature originates from the MHV property of the Yang–Mills scattering amplitude. 
To further substantiate this observation, we introduce controlled deformations away from the Yang–Mills theory and perform a formal consistency check.
It is found that the entanglement of the final states decreases whenever the theory deviates from the Yang–Mills locus. The persistence of the MaxE-to-MaxE behavior thus emerges as a unique characteristic of Yang–Mills dynamics.

It should be noted that the above statements are only valid at the tree level. At the loop level, more color structures like the double trace would enter, which would introduce more involved kinematical dependence in  color space. Moreover, the MHV behavior is violated by the appearance of rational terms at one loop. 

 We chose Yang-Mills theory as the adjoin scattering benchmark  for calculations in this work, but our treatment would apply equally well to any quantum system whose Hilbert space carries an SU(N) representation, like scattering  of baryons and mesons,  adjoint Higgs theories, etc. We leave a detailed treatment of such theories for future work.

We note that in this work, we have studied the entanglement of out states for \textit{given} initial states, using the linear entropy as a basis independent measure of  entanglement. On the other hand, if one wants to study entanglement over \textit{all} input states, then one would have to average over the product of two complex projective spaces where the input states live, using the Fubini-Study metric. We have not considered this option in either color or helicity space in this work, leaving it for a future study.

Across both color and helicity sectors, entanglement emerges as a basis-independent diagnostic of gauge consistency and representation structure.  It distinguishes minimally and intrinsically entangling representations, isolates EFT deformations, and restates the Ward constraints that characterize Yang-Mills theory.  
These results suggest that the information-theoretic viewpoint unifies algebraic, geometric, and dynamical aspects of scattering.

\section*{Acknowledgement}

We thank Jean-Fran\c{c}ois Fortin, Zhen Liu, Ian Low and especially Navin McGinnis for useful discussions. 
K.F-L and KS would like to thank the Aspen Center for Physics, which is supported by National Science Foundation grant PHY-2210452, for hospitality during the course of this work. KS is supported in part by the National Science Foundation under Award No. PHY-2514896.

\appendix

\section{Entanglement Formalism}\label{sec:setup}

Let us consider a Hilbert space,  in  flavor space, helicity space, color space, or some other category of interest. The dimension of the Hilbert space is denoted as G, namely there are G basis states, $|1 \rangle, \cdots |G \rangle$. 
For example, the eigen-states of the helicity operator $h = \hat{p} \cdot \Vec{S}$  form the helicity Hilbert space. 
For a massless field with spin $S$, the two states are $|+S\rangle$ and $|-S \rangle$ which serve as the basis of the two-qubit state.

We focus on the scattering of two incoming particles and consider the product of the two Hilbert spaces $\mathcal{H}_1 \bigotimes \mathcal{H}_2$. A general state in this product space is
\begin{equation}
    |\alpha \rangle = \sum_{i,j} \alpha_{ij} |ij \rangle,\quad
    |ij \rangle = |i \rangle_1 \otimes |j\rangle_2 \ .
\end{equation}
Therefore the incoming state can be described by the $G\times G$ matrix $\alpha_{ij}$ with the normalization condition $\sum_{ij} \alpha^*_{ij} \alpha_{ij} = {\rm Tr} (\alpha^\dagger \alpha) = 1$. If the rank of the $\alpha$ matrix equals  1, then it can be decomposed into the direct sum of two vectors, which indicates the state to be an unentangled state. 

There are many ways to parametrize the entanglement of the state $\alpha$. One convenient measure is the linear entropy. For a bipartite system with density matrix $\rho$, the linear entropy is expressed as
\begin{equation}
    E(\rho) = \dfrac{G}{G - 1} |1 - {\rm Tr} \rho_R^2|  = \dfrac{G}{G-1} \bigg| 1- \dfrac{\text{Tr}\srr{(\alpha^\dagger \alpha)^2}}{(\text{Tr} \, \alpha^\dagger \alpha)^2}\bigg| \ , 
\end{equation}
in which $\rho_R$ is the reduced density matrix obtained by tracing out one particle. The coefficient is for the normalization so that the maximal entropy is equal to 1. The minimal entropy is equal to 0 which indicates no entanglement between the two particles in the Hilbert space. For a pure state  $\alpha$, one often use the notation $E(\alpha)$. 
For Hilbert spaces with $G = 2$ and $G=3$, one has 
\begin{align}
    E(\alpha)|_{G = 2} &= 4 \lambda_1 \lambda_2 \\
    E(\alpha)|_{G = 3} &= 3\brr{\lambda_1 \lambda_2 + \lambda_2 \lambda_3 + \lambda_1 \lambda_3} \ .
\end{align}
Here $\lambda_i$ are the eigenvalues of the matrix $\alpha^\dagger \alpha$ with the normalization $\sum_i \lambda_i = 1$. 
This implies that the states $\alpha, \alpha^*, \alpha^T$ and $\alpha^\dagger$ all have the same entanglement power. 
Furthermore, if we transform the $\alpha$ matrix to $\alpha' = W \alpha V^\dagger$ with $W,V \subset U(G)$ group, the entanglement for $\alpha$ and $\alpha'$ are the same. 
We now demonstrate the details for $G=2$.

\subsection*{G = 2}
If the state is not entangled, then one eigenvalue is equal to zero. The matrix $\alpha^\dagger \alpha$ can then be expressed as
\begin{equation}
    \alpha^\dagger \alpha = U_2^\dagger 
    \begin{pmatrix}
     1 & \\
     & 0
    \end{pmatrix} U_2 = U_2^\dagger 
    \begin{pmatrix}
     1 & \\
     & 0
    \end{pmatrix} U_1 U_1^\dagger 
    \begin{pmatrix}
     1 & \\
     & 0
    \end{pmatrix} U_2 \ ,
\end{equation}
where $U_1$ and $U_2$ are the group elements in the U(2) group. Therefore the general form of the matrix $\alpha$ with vanishing entanglement is equal to
\begin{equation}
    \alpha = U_1^\dagger \begin{pmatrix}
     1 & \\
     & 0
    \end{pmatrix}
     U_2 = \begin{bmatrix}
        (U_1)_{11}\\ (U_1)_{21} 
    \end{bmatrix} 
    \begin{bmatrix}
    (U_2)_{11} & (U_2)_{12}
    \end{bmatrix}
\end{equation}
which is indeed the direct product of two vectors.

For the case of the maximally entangled state, or $E(\alpha) = 1$, the only solution is $\lambda_1 = \lambda_2 = 1/2$. Then similarly one can conclude that $\alpha$ is just the group element of U(2). 
This is consistent with the definition of the concurrence for the two-qubit state. 
The general form is expressed in terms of the basis of the Pauli matrix
\begin{equation}
    \alpha = \dfrac{e^{i \beta}}{\sqrt{2}} \brr{\cos\frac{\gamma}{2} + i\, \boldsymbol{n} \cdot \boldsymbol{\sigma} \sin\frac{\gamma}{2}} \ .
\end{equation}
Here $\boldsymbol{n}$ is a unit vector in 3-dimensional space and $\boldsymbol{\sigma}$ is a Pauli matrix. Or one use the alternative expression
\begin{equation}
    \alpha = \dfrac{e^{i \beta}}{\sqrt{2}}
 \begin{pmatrix}
     \cos\gamma + i \sin\gamma \cos\tau & i \sin\gamma \sin\tau \, e^{-i \psi} \\
     i \sin\gamma \sin\tau \, e^{i\psi} & \cos\gamma - i \sin\gamma \cos\tau 
 \end{pmatrix}  \ .
\end{equation}
Another possible parameterization is 
\begin{equation}
    \alpha = \dfrac{e^{i \beta}}{\sqrt{2}} \begin{pmatrix}
        \psi & \gamma \\
        -\gamma^* & \psi^*        
    \end{pmatrix}
\end{equation}
with the constraint $|\gamma|^2+|\psi|^2 = 1$.

\section{Scattering Amplitudes in Yang-Mills Theory}\label{ymscattbasics}

In this Appendix, we recapitulate some well-known aspects of the  scattering amplitude for  Yang-Mills, and place them in the context of the main text. 
We study the 2-to-2 tree level scattering amplitude of the massless gauge fields in center-of-mass(c.m.) frame, $g_{a}(p_1,\lambda_1) + g_{b}(p_2,\lambda_2)\rightarrow g_{c}(p_3,\lambda_3) + g_{d}(p_4,\lambda_4)$ with $a_i$ the color index and $\lambda_i$ the helicity of the $i$-th particle. The momentum is assigned as following
\begin{align}
p_1 &= -(E, 0, 0, E), &
p_3 &= (E, E \sin\theta \cos\varphi, E \sin\theta \sin\varphi, E \cos\theta), \nonumber\\
p_2 &= -(E, 0, 0, -E), &
p_4 &= (E, -E \sin\theta \cos\varphi, -E \sin\theta \sin\varphi, -E \cos\theta),
\end{align}
so that all momentum is outgoing. We will use the spinor-helicity formalism. The massless momentum can be decomposed into
\begin{equation}
    p_{\alpha\dot{\alpha}} = p^\mu (\sigma_\mu)_{\alpha\dot{\alpha}} = \lambda_\alpha \tilde{\lambda}_{\dot{\alpha}} = |p\rangle_{\alpha} [ p|_{\dot{\alpha}} \ .
\end{equation}
For the general momentum $p =(E, E \sin\theta \cos\varphi, E \sin\theta \sin\varphi, E \cos\theta)$, the explicit forms are
\begin{equation}
    |p\rangle_\alpha = \sqrt{2E} \begin{pmatrix}
        \cos\frac{\theta}{2} \\
        \sin\frac{\theta}{2} e^{i\varphi}
    \end{pmatrix}, \quad
[p|_{\dot{\alpha}} = \sqrt{2E} \begin{pmatrix}
        \cos\frac{\theta}{2} \\
        \sin\frac{\theta}{2} e^{-i\varphi}
    \end{pmatrix}
\end{equation}
The spinor product is defined as
\begin{equation}
    \langle ij \rangle = \epsilon^{\alpha\beta} |i\rangle_\alpha |j\rangle_{\beta}, \quad  
     [ ij ]= \epsilon^{\dot{\alpha}\dot{\beta}}  
     | i ]_{\dot{\alpha}} | j ]_{\dot{\beta}} \ .
\end{equation}
Among various helicity configurations, only the two-plus-two minus helicity configuration has non-vanishing amplitude, which is called the MHV (Maximally Helicity Violation) amplitude. 
This behavior can be understood from the little group scaling and gauge invariance. 
To isolate the color structure factor and the kinematics, 
one  first exploits the identity $i f_{abc} = \mathrm{Tr}\!\left( T_a T_b T_c \right) - \mathrm{Tr}\!\left( T_b T_a T_c \right)$ and the completeness relation $\left( T_a \right)_i^{\ j} \left( T_a \right)_k^{\ l} 
= \brr{\delta_i^{\ l} \delta_k^{\ j} 
-\delta_i^{\ j} \delta_k^{\ l}/N}/2$ to convert the color structure constant product into
\begin{align}\label{eq:color_decompo}
    f^{a b e} f^{e c d} 
&\propto 
\mathrm{Tr}\!\left( T_{a} T_{b} T_c T_d \right)\,
- \,\mathrm{Tr}\!\left( T_a T_b T_d T_c \right)\nonumber \\
&- \mathrm{Tr}\!\left( T_a T_c T_d T_b \right)
\,+ \,\mathrm{Tr}\!\left( T_a T_d T_c T_b \right) \ .
\end{align}
Therefore the full tree level four-point amplitude simplifies to
\begin{equation}
    M_{4}^{\text{full,tree}}
= g^{2} \left( M_{4}[1234] \, 
\text{Tr}\!\left( T_{a} T_b T_c T_d \right) 
+ \text{perms of } (234) \right).
\end{equation}
Here the $M_4[1234]$ is the gauge invariant partial amplitude, or \emph{color-ordered amplitude}. One can use the kinematic terms in Feynman rules for computation. The  Parke-Taylor formula gives $n$-gluon tree level amplitude~\cite{Parke:1986gb,Elvang:2013cua,Cheung:2017pzi}
\begin{equation}
    M_{n}[1^{+}\,\ldots\, i^{-}\,\ldots\, j^{-}\,\ldots\, n^{+}]
= \frac{\langle i j \rangle^{4}}{\langle 1 2 \rangle \langle 2 3 \rangle \cdots \langle n 1 \rangle} \, .
\end{equation}
For the four point amplitude, we have
\begin{equation}
    M_{4}[1^{-}\,2^{-}\,3^{+}\,4^{+}]
= \frac{\langle 1 2 \rangle^{4}}{\langle 1 2 \rangle \langle 2 3 \rangle \langle 3 4 \rangle \langle 4 1 \rangle} \, .
\end{equation}
Color-ordered amplitudes have the following properties:
\begin{itemize}
    \item Cyclic: $M_4[1234] = M_4 [2341]$
    \item Reflection: $M_4[1234] = (-1)^4 M_4[4321]$
    \item $U(1)$ decoupling identity:
    $M_4[1234]+M_4[2134]+M_4[2314] = 0$
\end{itemize}
Therefore the six color-ordered amplitudes satisfy
\begin{equation}
    M_4[1432] = M_4[1234],\quad
    M_4[1342] = M_4[1243], \quad
    M_4[1423] = M_4[1324]
\end{equation}
The photon decoupling identity or the Kleiss-Kuijf relation gives
\begin{equation}
 M_4[1234] + M_4[1342] + M_4[1423] = 0   \ .
\end{equation}
We use the following notation
\begin{align}
    M_4[1234] &= M_4[1432] , \quad
    M_4[1243] = M_4[1342] , \nonumber \\
    M_4[1324] &= M_4[1423] = -(M_4[1234] + M_4[1243]) 
\end{align}
Furthermore, the BCJ relation~\cite{Bern:2008qj} gives
\begin{equation}
    s_{14} M_4[1234] - s_{13} M_4[1243] = 0 \quad \text{or} \quad
    M_4[1234] \, u = M_4[1243] \, t 
\end{equation}
This reduces the number of the independent color-ordered amplitudes to one. The total amplitude can then be written as
\begin{equation}\label{eq:M4_full}
\begin{split}
    M_4^{\rm full,tree} =&  M_4[1234]
    \bigg\{ \text{Tr}\!\left( T_a T_b T_c T_d \right) + \text{Tr}\!\left( T_a T_d T_c T_b \right) +\dfrac{u}{t} \big( \text{Tr}\!\left( T_a T_b T_d T_c \right) \\
    +&  \text{Tr}\!\left( T_a T_c T_d T_b \right) \big) + \dfrac{s}{t} \big( \text{Tr}\!\left( T_a T_c T_b T_d \right)
    +  \text{Tr}\!\left( T_a T_d T_b T_c \right) \big) \bigg\}
\end{split}
\end{equation}
An alternative parametrization is used in the main text:
%
\begin{equation}
  M_4^{\text{full,tree}} = A_s\,c_s+\;A_t\,c_t+A_u\,c_u   \ ,
\end{equation}
in which $c_s=f_{abe}f_{cde},
c_t=-f_{ace}f_{bde},
c_u=f_{ade}f_{bce}$
are the corresponding color factors. 
They satisfy the Jacobi identity $c_s + c_t + c_u = 0$. 
The numerator $n_i$ factors depend on the kinematics and the external polarization vectors. 
From color-kinematics duality, one can always find a gauge choice such that 
\begin{equation}\label{eq:kin_jacobi}
   s A_s + t A_t + u A_u = 0 \ .
\end{equation}
 One can decompose the color factor $c_i$ via Eq.~(\ref{eq:color_decompo}) to obtain
\begin{equation}\label{Eq:A4_decomp_1}
\begin{split}
    M_4^{\text{full,tree}} \sim & \brr{A_s - A_u} \srr{\text{Tr}\brr{T_a T_b T_c T_d}+ \text{Tr}\brr{T_a T_d T_c T_b}} + \\ 
    & \brr{A_t - A_s} \srr{\text{Tr}\brr{T_a T_b T_d T_c}+ \text{Tr}\brr{T_a T_c T_d T_b}} + \\
    & \brr{A_u - A_t} \srr{\text{Tr}\brr{T_a T_c T_b T_d}+ \text{Tr}\brr{T_a T_d T_b T_c}} \ .
\end{split}
\end{equation}
We denote $A_s - A_t = a(\Theta)$ to match the convention in the context.  
From Eq.~(\ref{eq:kin_jacobi}), one can get $A_u = -(s A_s + t A_t)/u$ and
\begin{eqnarray}
    A_s - A_u = -\dfrac{t}{u} a 
    \quad A_u - A_t = -\dfrac{s}{u} a 
\end{eqnarray}
Actually even we are not starting from the basis or gauge choice in which the BCJ relation does not apply, one can always do the ``generalized gauge transformation". The coefficients $A_i - A_j$ is invariant under such transformation.  Inserting into the above equation gives
\begin{equation}
\begin{split}
    M_4^{\text{full,tree}} \sim & a \bigg\{ t\srr{ \text{Tr}\!\brr{ T_{a} T_{b} T_{c} T_{d}} + \text{Tr}\!\brr{ T_{a} T_{d} T_{c} T_{b} } } + u \Big[ \text{Tr}\!\left( T_{a} T_{b} T_{d} T_{c} \right) \\
    +&  \text{Tr}\!\left( T_{a} T_{c} T_{d} T_{b} \right) \Big] + s \srr{ \text{Tr}\!\left( T_{a} T_{c} T_{b} T_{d} \right)
    +  \text{Tr}\!\left( T_{a} T_{d} T_{b} T_{c} \right) } \bigg\}
\end{split}
\end{equation}
which shares the same structure as Eq.~(\ref{eq:M4_full}). Further  simplification can be realized by the reflection property $\text{Tr}\brr{T_a T_b T_c T_d} = \text{Tr}\brr{T_a T_d T_c T_b}$. This can be clearly seen by noting that each non-Abelian Lie group always admits a real adjoint representation. As a result, we have
\begin{equation}\label{Eq:A4_decomp_2}
M_4^{\text{full,tree}} \sim  2a \bigg[ t \, \text{Tr}\!\brr{ T_{a} T_{b} T_{c} T_{d}}   + u  \, \text{Tr}\!\left( T_{a} T_{b} T_{d} T_{c} \right) + s \, \text{Tr}\!\left( T_{a} T_{c} T_{b} T_{d} \right) \bigg]
\end{equation}
%


\section{Entanglement in SU(2) Color Space for $\theta = \pi/2$} \label{maxentsu2piover2}

For SU(2), the structure constants  $f_{abc}=\epsilon_{abc}$ and $d_{abc} = 0$. Therefore one has
\begin{equation}
\begin{split}
    M_{cd,ab}(\theta) &\sim 2 \delta_{ab} \delta_{cd} - (1-\cos\theta)\delta_{ac} \delta_{bd} - (1+\cos\theta) \delta_{ad} \delta_{bc}  \\
    & = 2 P_\text{singlet} - (1-\cos\theta) \mathbb{I} -(1+\cos\theta) \mathbb{S}
\end{split}
\end{equation}
The explicit matrix form is as
\begin{equation}\label{eq:color_matrix}
    \begin{pmatrix}
\begin{pmatrix}
0&0&0\\
0&-1&0\\
0&0&-1
\end{pmatrix} &
\begin{pmatrix}
0&\sq&0\\
\cq&0&0\\
0&0&0
\end{pmatrix} &
\begin{pmatrix}
0&0&\sq\\
0&0&0\\
\cq &0&0
\end{pmatrix}
\\[6pt]
\begin{pmatrix}
0&\cq&0\\
\sq&0&0\\
0&0&0
\end{pmatrix} &
\begin{pmatrix}
-1&0&0\\
0&0&0\\
0&0&-1
\end{pmatrix} &
\begin{pmatrix}
0&0&0\\
0&0&\sq\\
0&\cq&0
\end{pmatrix}
\\[6pt]
\begin{pmatrix}
0&0&\cq\\
0&0&0\\
\sq&0&0
\end{pmatrix} &
\begin{pmatrix}
0&0&0\\
0&0&\cq\\
0&\sq&0
\end{pmatrix} &
\begin{pmatrix}
-1&0&0\\
0&-1&0\\
0&0&0
\end{pmatrix}
\end{pmatrix}
\end{equation}
Here we have dropped the common factor which would be normalized out when computing the density matrix. 
The entanglement function does not rely on the helicity configurations since there is only one independent color-ordered partial amplitude after imposing the Kleiss-Kuijf
relations and BCJ relations as shown in the last section.

\subsection{Non-entangled Initial State}

We first start from an initial state with zero entanglement. The initial state matrix can be parametrized as
\begin{equation}
    \alpha = U_1^\dagger \begin{pmatrix}
     1 & &\\
     & 0 &\\
     & & 0
    \end{pmatrix}
    \begin{pmatrix}
     1 & &\\
     & 0 &\\
     & & 0
    \end{pmatrix} U_2 = \begin{bmatrix}
        (U_1)_{11}\\ (U_1)_{21} \\ (U_1)_{31}
    \end{bmatrix} 
    \begin{bmatrix}
    (U_2)_{11} & (U_2)_{12} & (U_2)_{13}
    \end{bmatrix}
\end{equation}
For simplicity we select $U_1$ and $U_2$ from the subgroup SO(3). The general form of the SO(3) group elements is given by
\begin{equation}
\begin{split}
& R(\lambda_1,\lambda_2,\lambda_3) =  R_z(\lambda_1) R_y(\lambda_2) R_z(\lambda_3) =  \\  & \left(
\begin{array}{c@{\hspace{1.2em}}c@{\hspace{1.2em}}c}
-\sin\lambda_1 \sin\lambda_3+\cos\lambda_1 \cos\lambda_2 \cos\lambda_3 &
-\sin\lambda_1 \cos\lambda_3 - \cos\lambda_1 \cos\lambda_2 \sin\lambda_3 &
\cos\lambda_1 \sin\lambda_2
\\[6pt]
\sin\lambda_1 \cos\lambda_2 \cos\lambda_3 + \cos\lambda_1 \sin\lambda_3 &
-\sin\lambda_1 \cos\lambda_2 \sin\lambda_3 + \cos\lambda_1 \cos\lambda_3 &
\sin\lambda_1 \sin\lambda_2
\\[6pt]
-\sin\lambda_2 \cos\lambda_3 &
\sin\lambda_2 \sin\lambda_3 &
\cos\lambda_2
\end{array}
\right)\ .
\end{split}
\end{equation}
In the following we choose three different initial states as
\begin{equation}
    \alpha_1 = \mathbb{I}_3 \Lambda_1 R(\pi,\pi,0), \quad
    \alpha_2 = \Lambda_1 R\brr{0.57,1.1,2.25},
    \quad
    \alpha_3 = R\brr{\pi,\dfrac{\pi}{2},\dfrac{\pi}{3} } \Lambda_1 R\brr{\dfrac{\pi}{2},\dfrac{\pi}{4},\dfrac{\pi}{6}}
\end{equation}
in which $\Lambda_1$ is the diagonal matrix $ \text{Diag}\brr{1,0,0}$. 
\begin{figure}
    \centering
    \includegraphics[width=0.6\linewidth]{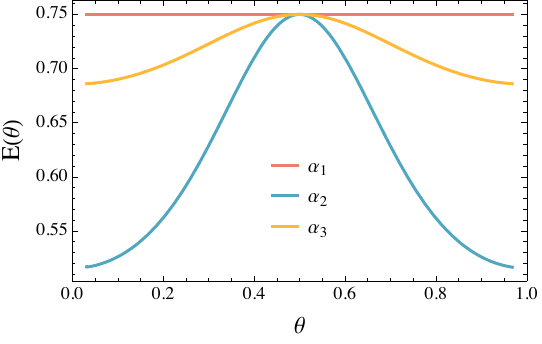}
    \caption{The entanglement as the function of the scattering polar angle $\theta$ for different 3 initial states.}
    \label{fig:entangle_theta}
\end{figure}

Applying the scattering matrix in color space on any initial non-entangled state, one finds that the entanglement of the final state reaches a maximal value of $3/4$ at $\theta = \pi/2$. 
In Fig.~\ref{fig:entangle_theta}, we display the entanglement as a function of the scattering polar angle $\theta$. 
As the initial states vary, the corresponding entanglement power profile with $\theta$ varies as well, with all curves meeting at the common point $(\theta = \pi/2, E = 3/4)$.

\bibliographystyle{utphys}
\bibliography{ref}

\end{document}